\newlist{steps}{enumerate}{1}
\setlist[steps, 1]{label = \underline{Step \arabic*}.}
\theoremstyle{plain}
\newtheorem{theorem}{Theorem}[section]
\newtheorem{remark}{Remark}
\newtheorem{assumption}{Assumption}
\numberwithin{equation}{section}
\def\E{\mathbb{E}}
\def\R{\mathbb{R}}
\title{The Compound BSDE Method: A Fully Forward Method for Option Pricing and Optimal Stopping Problems in Finance}
\author{%
Zhipeng Huang\thanks{Mathematical Institute, Utrecht University, Postbus 80010, 3508 TA Utrecht, The Netherlands.  
Corresponding author: \texttt{z.huang1@uu.nl}} 
\and
Cornelis W. Oosterlee\thanks{Mathematical Institute, Utrecht University, Postbus 80010, 3508 TA Utrecht, The Netherlands.}
}
\date{\today} 
\newcounter{case}
\newcommand{\case}[1]{%
  \refstepcounter{case}%
  \par\medskip
  \noindent\textbf{Case \thecase: #1} 
  \par\smallskip
}
\begin{document}

\maketitle
\begin{abstract}

We propose the Compound BSDE method, a fully forward, deep-learning-based approach for solving a broad class of problems in financial mathematics, including optimal stopping. The method is based on a reformulation of option pricing problems in terms of a system of backward stochastic differential equations (BSDEs), which offers a new perspective on the numerical treatment of compound options and optimal stopping problems such as Bermudan option pricing. Building on the classical deep BSDE method for a single BSDE, we develop an algorithm for compound BSDEs and establish its convergence properties. In particular, we derive an \emph{a posteriori} error estimate for the proposed method. Numerical experiments demonstrate the accuracy and computational efficiency of the approach, and illustrate its effectiveness for high-dimensional option pricing and optimal stopping problems.

\end{abstract}

\textbf{Keywords:} Compound BSDE method, option pricing, optimal stopping, reflected BSDE.

\tableofcontents

\section{Introduction}\label{sec:intro}

Option pricing is a central topic in financial mathematics, especially for derivatives with complex payoff structures, early-exercise features, or high-dimensional underlying dynamics. Over the past decades, a wide range of numerical methods has been developed to address these challenges. Without attempting to provide an exhaustive overview, we mention partial differential equation (PDE) based methods \cite{wong2008artificial,forsyth2002quadratic,reisinger2012use}, Fourier-based techniques such as the COS method \cite{fang2009novel,fang2009pricing}, and Monte Carlo regression-based approaches \cite{longstaff2001valuing,bouchard2012monte}.

In addition to these approaches, backward stochastic differential equations (BSDEs) \cite{el1997backward} provide a flexible framework for option pricing. Via the Feynman--Kac representation \cite{pardouxpeng}, a pricing problem can be reformulated as a BSDE and solved numerically. A broad class of numerical schemes for BSDEs and forward--backward SDEs (FBSDEs) has been developed and analyzed; see, for example, \cite{bouchard2004discrete,bouchard2004malliavin,gobet2005regression,PagesPhamPrintems2004,bender2007forward,bender2008time}. More recently, machine-learning-based approaches have attracted considerable interest due to their ability to address high-dimensional problems, including the deep BSDE method \cite{han2018,hanlong2020}, deep splitting \cite{BeckBeckerCheridito2019}, and deep backward dynamic programming \cite{hure2020deep}, as well as related variants \cite{andersson2023,ji2022,negyesi2024generalized}.

Reflected BSDEs provide a natural framework for pricing American- and Bermudan-style options; see, for example, \cite{gobet2008numerical, JFCr2012, memin2008convergence, BouchardChassagneux2008, hure2020deep, negyesi2025deep} for numerical methods for solving this type of BSDE. These approaches share a similar philosophy with the least-squares Monte Carlo method \cite{longstaff2001valuing}: one simulates the underlying dynamics forward in time and then optimizes backward. In recent contributions \cite{wang2018deep, gao2023convergence}, a backward variant of the deep BSDE method has been studied, in which the loss functional is replaced by the variance of the $Y$ process at the initial time. Numerical results suggest that this approach may be applicable to Bermudan option pricing. However, the convergence of this method for pricing early-exercise options has not been established in \cite{gao2023convergence} and remains an open problem.

In this paper, we introduce the Compound BSDE method, a deep-learning-based approach for solving a broad class of option pricing and optimal stopping problems in finance. The main contributions of this work can be summarized as follows:
\begin{itemize}
    \item We introduce the compound BSDE, a new BSDE formulation with a wide range of financial applications, in particular for option pricing and optimal stopping. The formulation is inspired by the payoff structure of compound options. A detailed discussion of this class of exotic options and their connection to our framework is provided in Section~\ref{sec4}.
    \item Building on the deep BSDE method, we develop the Compound BSDE method tailored to the new formulation and derive an \emph{a posteriori} error estimate. We emphasize that the resulting algorithm is naturally a \emph{fully forward method} and can be applied directly to optimal stopping problems, in contrast to many existing approaches in the aforementioned literature.
    \item We present numerical experiments that validate the theoretical results. The results demonstrate that the proposed method achieves accurate prices and hedges for a variety of option types, including Bermudan-style options in high-dimensional settings.
\end{itemize}

The remainder of the paper is organized as follows. In Section~\ref{sec2}, we introduce the compound BSDE formulation and the associated numerical method. In Section~\ref{sec3}, we discuss well-posedness of the formulation and establish convergence of the proposed method. Section~\ref{sec4} explains how the compound BSDE framework can be applied to various option pricing and optimal stopping problems. Numerical results are presented in Section~\ref{sec5}. Section~\ref{sec6} concludes the paper.

\section{The Compound BSDE Method}\label{sec2}

In this section, we introduce the compound BSDE formulation and the associated numerical method. We begin with a brief review of the deep BSDE method, which serves as an important building block for our approach, and then define the compound BSDE motivated by the structure of compound options.

\subsection{Review of the deep BSDE method}

A backward stochastic differential equation (BSDE) on a finite time horizon $[0,T]$ is formulated on a filtered probability space
$(\Omega,\mathcal{F},\{\mathcal{F}_t\}_{0\le t\le T},\mathbb{P})$, and typically takes the form
\begin{equation}\label{eq:BSDE}
\left\{
\begin{aligned}
\mathcal{X}_t & = \mathcal{X}_0
+ \int_0^t \mu(s,\mathcal{X}_s)\,\mathrm{d}s
+ \int_0^t \sigma(s,\mathcal{X}_s)\,\mathrm{d}W_s, \\
\mathcal{Y}_t & = g(\mathcal{X}_T)
+ \int_t^{T} f(s,\mathcal{X}_s,\mathcal{Y}_s,\mathcal{Z}_s)\,\mathrm{d}s
- \int_t^{T} \mathcal{Z}_s\,\mathrm{d}W_s ,
\end{aligned}
\right.
\end{equation}
where $W \coloneqq \{W_t\}_{0\le t\le T}$ is a Brownian motion, and $\mu$, $\sigma$, $g$, and $f$ are deterministic functions. The solution of \eqref{eq:BSDE} is a triple of adapted processes
$(\mathcal{X},\mathcal{Y},\mathcal{Z}) \coloneqq \{(\mathcal{X}_t,\mathcal{Y}_t,\mathcal{Z}_t)\}_{0\le t\le T}$ satisfying appropriate integrability conditions.

The deep BSDE method \cite{han2018} approximates the solution of \eqref{eq:BSDE} by solving a stochastic optimization problem based on a time discretization. Specifically, it considers
\begin{subequations}\label{eq:deepBSDE}
\begin{align}
& \inf_{\mathcal{Y}_{t_0}^\pi, \{\mathcal{Z}_{t_i}^\pi\}  }
\quad \mathbb{E}\big\|g(\mathcal{X}_T^\pi)-\mathcal{Y}_T^\pi\big\|^2,
\label{eq:deepBSDE-a}\\
& \text{subject to}
\left\{
\begin{aligned}
\mathcal{X}_{t_{i+1}}^\pi &=
\mathcal{X}_{t_i}^\pi
+ b(t_i,\mathcal{X}_{t_i}^\pi)h
+ \sigma(t_i,\mathcal{X}_{t_i}^\pi)\Delta W_{t_i},\\[3pt]
\mathcal{Y}_{t_{i+1}}^\pi &=
\mathcal{Y}_{t_i}^\pi
- f(t_i,\mathcal{X}_{t_i}^\pi,\mathcal{Y}_{t_i}^\pi,\mathcal{Z}_{t_i}^\pi)h
+ \mathcal{Z}_{t_i}^\pi \Delta W_{t_i} ,
\end{aligned}
\right.
\label{eq:deepBSDE-b}
\end{align}
\end{subequations}
where we denote $\| x \|$  the Euclidean norm if $x$ is a vector or the Frobenius norm if $x$ is a matrix,  $\Delta W_{t_i} \coloneqq W_{t_{i+1}} - W_{t_{i}}$ is the Brownian increment, $\pi$ denotes a uniform partition of $[0, T]$ with step size $h$. The quantities $\mathcal{Y}_{t_0}^\pi$ and $\mathcal{Z}_{t_i}^\pi$ are parameterized by neural networks and serve as approximations of $\mathcal{Y}_{t_0}$ and $\mathcal{Z}_{t_i}$, respectively. The objective functional \eqref{eq:deepBSDE-a} acts as the loss function during the training of neural networks, and enforces the terminal condition $\mathcal{Y}_T=g(\mathcal{X}_T)$.

Two properties of \eqref{eq:deepBSDE} are particularly relevant for our development. First, as noted in \cite{hanlong2020}, the initial state $\mathcal{X}_{t_0}^\pi$ need not be deterministic, but may be any square-integrable random variable. In this case, $\mathcal{Y}_{t_0}^\pi$ learns the dependence on the distribution of $\mathcal{X}_{t_0}^\pi$ rather than a fixed constant. Second, convergence of the method depends primarily on how small the objective function value is, and does not depend explicitly on $\mathcal{Y}_{t_0}^\pi$. We remark that when $\mathcal{X}_{t_0}^\pi$ is random, the backward deep BSDE method of \cite{gao2023convergence} cannot be applied directly, since the property $\operatorname{Var}(\mathcal{Y}_0)=0$ no longer holds.

\subsection{Motivation: compound options}

We now explain the intuition behind the compound BSDE formulation. Consider a simple compound option, namely a call-on-call option. This product can be viewed as a call option (the outer option) whose underlying asset is another call option (the inner option). When the outer option expires at time $T_1>0$, the holder has the right to purchase the inner option, which itself expires at a later time $T_2>T_1$, at a specified strike price.

In the BSDE framework \eqref{eq:BSDE}, the $\mathcal{Y}$ process represents the option value. In the current call-on-call setting, this naturally leads to two BSDEs: one describing the value of the inner option and one describing the value of the outer option, coupled at the intermediate time $T_1$ through the exercise decision. This idea extends naturally to an $M$-fold compound option, where multiple BSDEs are linked at a sequence of intermediate times $T_j$ for $j=1,2,\dots M-1$.

\subsection{Definition of the compound BSDE}

We now introduce the compound BSDE formally. Let $(\Omega,\mathcal{F},\{\mathcal{F}_t\}_{0\le t\le T},\mathbb{P})$ be a filtered probability space, where $\{\mathcal{F}_t\}$ is the natural filtration of a $d_3$-dimensional Brownian motion $W=\{W_t\}_{0\le t\le T}$. Let $M\in\mathbb{N}$ and $0=T_0<T_1<\cdots<T_M=T$. The $M$-fold compound BSDE on $[0,T]$ is defined by
\begin{equation}\label{eq:compoundBSDE}
\left\{
\begin{aligned}
X_t &=
X_0 + \int_0^t \mu(s,X_s)\,\mathrm{d}s
+ \int_0^t \sigma(s,X_s)\,\mathrm{d}W_s,  \quad \forall t\in[0, T] \\
Y_{1,t} &=
g_1(X_{T_1},Y_{2,T_1})
+ \int_t^{T_1} f_1(s,X_s,Y_{1,s},Z_{1,s})\,\mathrm{d}s
- \int_t^{T_1} Z_{1,s}\,\mathrm{d}W_s, \quad \forall t\in[T_0, T_1]  \\ 
Y_{2,t} &=
g_2(X_{T_2},Y_{3,T_2})
+ \int_t^{T_2} f_2(s,X_s,Y_{2,s},Z_{2,s})\,\mathrm{d}s
- \int_t^{T_2} Z_{2,s}\,\mathrm{d}W_s, \quad \forall t\in[T_1, T_2] \\
&\hspace{1.2cm}\vdots \\ 
Y_{M,t} &=
g_M(X_T)
+ \int_t^{T_M} f_M(s,X_s,Y_{M,s},Z_{M,s})\,\mathrm{d}s
- \int_t^{T_M} Z_{M,s}\,\mathrm{d}W_s, \quad \forall t\in[T_{M-1}, T_M] .
\end{aligned}
\right.
\end{equation}

Here, $T_j$ and $g_j$ for $j=1,\ldots, M-1$ are referred to as the \emph{compounding times} and \emph{compounding conditions}, respectively, to distinguish them from the terminal time $T_M$ and terminal condition $g_M$. The functions $g_j$, $f_j$, $\mu$, and $\sigma$ are all deterministic. The solution to \eqref{eq:compoundBSDE}, denoted by $(X,Y,Z)$, consists of the $\mathbb{R}^{d_1}$-valued forward process $X$ and the family of $\mathbb{R}^{d_2} \times \mathbb{R}^{d_2\times d_3}$ - valued backward processes $\{(Y_{j, t},Z_{j,t})\}_{j, t}$, which are adapted and square-integrable. The well-posedness of this system will be discussed in Section~\ref{sec3}.

The compound BSDE \eqref{eq:compoundBSDE} comprises a single forward SDE coupled with $M$ backward equations defined on successive time intervals. These equations interact only at the compounding times through the functions $g_j$, which may depend on the unknown values $Y_{j+1,T_j}$. This distinguishes the compound BSDE from a collection of independent BSDEs and requires the system to be treated as a whole. This coupling or compounding structure plays a central role in both the convergence analysis in Section~\ref{sec3} and the applications discussed in Section~\ref{sec4}.

\subsection{The Compound BSDE method}

To solve \eqref{eq:compoundBSDE} numerically, we rewrite all backward equations in a forward simulation framework and enforce the compounding and terminal conditions by minimizing a joint objective functional. This leads to the Compound BSDE method,
\begin{align}\label{method:compoundBSDE}
& \inf_{ \{Y_{j,T_{j-1}}^\pi\},\ \{Z_{j,t_i}^\pi\} }
\quad \sum_{j=1}^{M-1} \mathbb{E} \|g_j(X_{T_j}^\pi, Y_{j+1, T_j}^\pi) - Y_{j, T_j}^\pi\|^2 + \mathbb{E} \|g_M(X_{T_M}^\pi) - Y_{M, T_M}^\pi\|^2  \\
& \text{s.t.  }  
\left\{
\begin{aligned}
X_{t_{i+1}}^\pi & = X_{t_i}^\pi+b\left(t_i, X_{t_i}^\pi\right) h+\sigma\left(t_i, X_{t_i}^\pi\right) \Delta W_{t_i},  \quad X_{t_0}^\pi = X_0 \\ 
Y_{1, t_{i+1}}^\pi & = Y_{1, t_i}^\pi - f_1(t_i, X_{t_i}^\pi, Y_{1, t_i}^\pi, Z_{1, t_i}^\pi ) h + Z_{1, t_i}^\pi \Delta  W_{t_i}, 
\\
Y_{2, t_{i+1}}^\pi & = Y_{2, t_i}^\pi - f_2(t_i, X_{t_i}^\pi, Y_{2, t_i}^\pi, Z_{2, t_i}^\pi ) h + Z_{2, t_i}^\pi \Delta  W_{t_i},    \\
    & \vdotswithin{=}       \\
Y_{M, t_{i+1}}^\pi & = Y_{M, t_i}^\pi - f_M(t_i, X_{t_i}^\pi, Y_{M, t_i}^\pi, Z_{M, t_i}^\pi ) h + Z_{M, t_i}^\pi \Delta  W_{t_i},   
\end{aligned}
\right.
\end{align}

Here, we adopt the same conventions for the norms, the Brownian increments $\Delta W_{t_i}$ and the uniform partition $\pi$ as in the description of the deep BSDE method. The initial values $\{Y_{j,T_{j-1}}^\pi\}_j$ and the processes $\{Z_{j,t_i}^\pi\}_{j,t_i}$ are parameterized by neural networks. Moreover, let $N_j$ denote the number of time discretization steps for the process $Y_{j,t_i}^\pi$ running on $[T_{j-1}, T_j]$, for $j=1,2,\ldots,M$, so that $N_1 + N_2 + \cdots + N_M = N$ is the total number of time steps. For convenience, we assume that $(T_j - T_{j-1})/N_j = h$ for all $j$.

The resulting algorithm is a fully forward method: both the forward SDE and all backward equations are simulated forward in time. In contrast to the classical deep BSDE method, the first $M-1$ compounding conditions specified by $g_j$ are not known, as all these $g_j$ may depend on the unknown solution $Y$ itself. Consequently, all $M$ BSDEs must be learned \textit{simultaneously}. This simultaneous enforcement of all compounding conditions is the defining feature of the Compound BSDE method.

\section{Convergence Analysis}\label{sec3}

In this section, we first address the well-posedness of the compound BSDE formulation~\eqref{eq:compoundBSDE} and collect the $L^2$-regularity properties needed in the convergence analysis. Combining these results with the \emph{a posteriori} error estimate for the deep BSDE method applied to~\eqref{eq:BSDE}, we then obtain a concise convergence proof for the Compound BSDE method~\eqref{method:compoundBSDE}.

\begin{assumption}\label{assum1}
The coefficients in~\eqref{eq:compoundBSDE} satisfy the following conditions.
\begin{enumerate}[label=(\roman*).]
\item
The mappings
$b:[0,T]\times\R^{d_1}\to\R^{d_1}$,
$\sigma:[0,T]\times\R^{d_1}\to\R^{d_1\times d_3}$,
$f_j:[T_{j-1},T_j]\times\R^{d_1}\times\R^{d_2}\times\R^{d_2\times d_3}\to\R^{d_2}$ for all $j$,
$g_M:\R^{d_1}\to\R^{d_2}$, and
$g_j:\R^{d_1}\times\R^{d_2} \to\R^{d_2}$ for $1\leq j \leq M-1$,
are all deterministic.

\item
The functions $b(\cdot,0)$ and $\sigma(\cdot,0)$ are bounded. Moreover, $f_j(\cdot,0,0,0)$ for all $j$, $g_j(0,0)$ for $1\leq j \leq M-1$, and $g_M(0)$ are bounded.

\item
The functions $b$, $\sigma$, $f_j$, and $g_j$ for all $j$, are uniformly Lipschitz continuous in $(x,y,z)$.

\item
The functions $b$, $\sigma$, and $f_j$ for all $j$, are uniformly $\tfrac12$-H\"older continuous in $t$.

\end{enumerate}
\end{assumption}

\begin{theorem}\label{thm:pde}
Let Assumption \ref{assum1} hold. Then we have the following.
\begin{enumerate}[label=(\roman*).]
\item
The compound BSDE~\eqref{eq:compoundBSDE} admits a unique adapted solution $(X, Y, Z) $.

\item
For each $j=1,\dots,M$, there exists a mapping $u_j$ such that
\begin{equation}
Y_{j,t}=u_j(t,X_t),\qquad t\in[T_{j-1},T_j],
\end{equation}
and $u_j$ is the viscosity solution to the system of semilinear PDEs, for $\ell=1,2,\cdots, d_2$,
\begin{equation}\label{eq:systemofpde}
\left\{
\begin{aligned}
\partial_t u_j^\ell(t, x)
& + \frac12 \operatorname{Tr} \Bigl(\sigma\sigma^{\mathrm{T}}(t,x) \partial_x^2 u_j^\ell (t, x) \Bigr)
+ b^{\mathrm{T}}(t,x) \partial_x u_j^\ell (t,x) \\
&\qquad + f_j^\ell \Bigl(t, x, u_j(t,x), (\partial_x u_j(t,x))^{\mathrm{T}} \sigma(t,x) \Bigr) = 0,
\qquad \forall (t,x)\in[T_{j-1}, T_j) \times\R^{d_1},\\
u_j(T_j,x)& = \bar g_j(x),   \qquad \forall x \in \R^{d_1},
\end{aligned}
\right.
\end{equation}
where $\bar g_M(x)\coloneqq g_M(x)$ and, for $1\le j\le M-1$,
\begin{equation}
\bar g_j(x) \coloneqq g_j \bigl(x, u_{j+1}(T_j,x) \bigr).
\end{equation}

Moreover, each $u_j$ is uniformly Lipschitz in $x$, i.e.,
\begin{equation}
\|u_j(t,x_1)-u_j(t,x_2)\| \leq C\|x_1-x_2\|
\qquad \text{for all } t\in[T_{j-1},T_j],\ x_1,x_2\in\R^{d_1}.
\end{equation}

\item If we additionally assume that $u_j \in C^{1,2} ( [T_{j-1}, T_j] \times \mathbb{R}^{d_1}, \mathbb{R}^{d_2})$, then the PDE \eqref{eq:systemofpde} holds in the classical sense, and we have, for all $j$,
\begin{equation}\label{eq:feynman-kac}
Y_{j,t}=u_j(t,X_t), \quad Z_{j,t} = (\nabla_x u_j(t, X_t))^{\mathrm{T}} \sigma(t, X_t), \quad \forall t\in [T_{j-1}, T_j]
\end{equation}

\end{enumerate}
\end{theorem}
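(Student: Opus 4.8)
The natural strategy is \emph{backward induction on the block index} $j$, peeling off one interval $[T_{j-1},T_j]$ at a time (from $j=M$ down to $j=1$) and reducing each block to the classical theory of decoupled Markovian FBSDEs. First, under Assumption~\ref{assum1} the coefficients $b,\sigma$ are globally Lipschitz and of linear growth in $x$, so the forward SDE has a unique square-integrable strong solution $X$ on $[0,T]$, and the associated flow $X^{t,x}$ (started from $x$ at time $t$) obeys the standard estimates $\E\sup_{s\in[t,T]}\|X^{t,x_1}_s-X^{t,x_2}_s\|^2\le C\|x_1-x_2\|^2$ and $\E\sup_{s}\|X^{t,x}_s\|^2\le C(1+\|x\|^2)$; this $X$ is fixed throughout. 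For the base case $j=M$, the last line of~\eqref{eq:compoundBSDE} is a standard decoupled Markovian BSDE with Lipschitz driver $f_M$ and Lipschitz terminal datum $g_M(X_T)$; the nonlinear Feynman--Kac theory \cite{pardouxpeng} gives a unique adapted square-integrable pair $(Y_M,Z_M)$, a deterministic function $u_M(t,x)\coloneqq Y^{t,x}_{M,t}$ with $Y_{M,t}=u_M(t,X_t)$, the viscosity-solution property of $u_M$ for~\eqref{eq:systemofpde} with $\bar g_M=g_M$, and — combining the a priori BSDE estimate with the forward flow bound and the Lipschitz continuity of $f_M,g_M$ — the uniform Lipschitz bound on $u_M$ in $x$.

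For the induction step, suppose that for some $1\le j\le M-1$ the pair $(Y_{j+1},Z_{j+1})$ on $[T_j,T_{j+1}]$ and a uniformly Lipschitz $u_{j+1}$ with $Y_{j+1,s}=u_{j+1}(s,X_s)$ have already been constructed. Since $g_j$ is Lipschitz in $(x,y)$, the map $\bar g_j(x)\coloneqq g_j\bigl(x,u_{j+1}(T_j,x)\bigr)$ is again Lipschitz, and evaluating the induction hypothesis at $s=T_j$ shows that the compounding condition collapses to a genuine terminal condition, $Y_{j,T_j}=g_j(X_{T_j},Y_{j+1,T_j})=\bar g_j(X_{T_j})$. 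Thus on $[T_{j-1},T_j]$ we again face a decoupled Markovian BSDE with Lipschitz data $f_j,\bar g_j$; applying \cite{pardouxpeng} once more yields a unique $(Y_j,Z_j)$, the representation $Y_{j,t}=u_j(t,X_t)$ with $u_j(t,x)=Y^{t,x}_{j,t}$, the viscosity-solution property for~\eqref{eq:systemofpde}, and the uniform Lipschitz bound on $u_j$ (same argument as for $j=M$). Running the induction down to $j=1$ establishes (i) and (ii); uniqueness of the whole system is inherited block by block, because at each level the terminal/compounding datum is completely determined by the already-fixed higher-index solution.

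For part (iii), if additionally $u_j\in C^{1,2}([T_{j-1},T_j]\times\R^{d_1},\R^{d_2})$, then a viscosity solution of~\eqref{eq:systemofpde} is a classical solution. Applying It\^o's formula componentwise to $s\mapsto u_j(s,X_s)$ on $[T_{j-1},T_j]$ and using the PDE to rewrite the drift gives
\begin{equation}
u_j(T_j,X_{T_j})-u_j(t,X_t)= -\int_t^{T_j} f_j\bigl(s,X_s,u_j(s,X_s),(\partial_x u_j(s,X_s))^{\mathrm T}\sigma(s,X_s)\bigr)\,\mathrm ds + \int_t^{T_j}(\partial_x u_j(s,X_s))^{\mathrm T}\sigma(s,X_s)\,\mathrm dW_s.
\end{equation}
Comparing this identity with the $j$-th line of~\eqref{eq:compoundBSDE} and invoking uniqueness of the BSDE solution (equivalently, uniqueness of the semimartingale decomposition) identifies $Y_{j,t}=u_j(t,X_t)$ and $Z_{j,t}=(\nabla_x u_j(t,X_t))^{\mathrm T}\sigma(t,X_t)$ for $\mathrm dt\otimes\mathrm d\P$-a.e.\ $(t,\omega)$, which is~\eqref{eq:feynman-kac}.

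The only analytically delicate point is the uniform Lipschitz regularity of the $u_j$ in $x$: since no nondegeneracy of $\sigma$ is assumed, this cannot be read off from PDE regularity theory and must be obtained probabilistically, by propagating the Lipschitz constant through the BSDE stability estimates together with the forward flow bound. One must in particular check that the Lipschitz constant of $\bar g_j$ stays finite after the composition $g_j(x,u_{j+1}(T_j,x))$ at each compounding time, so that the constants do not blow up as the induction proceeds from $j=M$ down to $j=1$.
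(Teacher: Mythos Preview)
Your proposal is correct and follows essentially the same route as the paper: backward induction on $j$ from $M$ down to $1$, at each step observing that the compounding condition $g_j(X_{T_j},Y_{j+1,T_j})$ collapses to a Lipschitz terminal datum $\bar g_j(X_{T_j})$ once $u_{j+1}$ is known to be Lipschitz, and then invoking the standard decoupled Markovian BSDE theory on $[T_{j-1},T_j]$. Your write-up is in fact more detailed than the paper's own proof, which does not spell out part~(iii) or the forward flow estimates; both are handled correctly in your argument.
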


\begin{proof}
We argue backwards in time, starting from $j=M$. For $j=M$, the terminal condition is $\bar g_M=g_M$, which is Lipschitz by Assumption~\ref{assum1}. Existence and uniqueness of the BSDE on $[T_{M-1},T_M]$, as well as the Markovian representation $Y_{M,t}=u_M(t,X_t)$ with $u_M$ a viscosity solution that is Lipschitz in $x$, follow from standard BSDE theory; see, for instance, Theorems~5.2.1 and~5.5.8 in \cite{zhang2017backward}.

Assume now that the claim holds for $u_{j+1}$. By the Lipschitz continuity of $g_j$ in its arguments and the Lipschitz property of $u_{j+1}(T_j,\cdot)$, the induced function $\bar g_j(\cdot)$ is again Lipschitz. Applying the same BSDE well-posedness and Markovian representation result on the interval $[T_{j-1},T_j]$ yields existence and uniqueness for $(Y_j,Z_j)$ and the corresponding PDE representation. Iterating this argument for $j=M-1,M-2,\dots,1$ completes the proof.
\end{proof}

Using the well-posedness results, we can further derive \(L^2\)-regularity estimates for the solution \((X,Y,Z)\) of \eqref{eq:compoundBSDE} based on classical BSDE theory \cite{zhang2004numerical, zhang2017backward}. To simplify the subsequent analysis, we impose Assumption \ref{assume: cadlag}, which ensures that the process \(Z\) admits a càdlàg modification. For notational convenience, recall the uniform partition $\pi$, we define the projection map $\pi(t) = t_i$ for $t \in [t_i, t_{i+1})$.

\begin{assumption}\label{assume: cadlag}
The diffusion coefficient $\sigma(t,x)$ in \eqref{eq:compoundBSDE} satisfies one of the following conditions: either $\sigma(t,x)$ is uniformly Lipschitz continuous in $t$, or dimensions $d_1 = d_3$ and the uniform ellipticity condition $\sigma(t,x)\sigma(t,x)^\top \ge \delta \mathbf{I}_{d_1}$ holds for some constant $\delta>0$.
\end{assumption}

\begin{theorem}[$L^2$-regularity]\label{thm:L2-regularity}
Let Assumptions \ref{assum1} and \ref{assume: cadlag} hold, and \((X,Y,Z)\) be the solution to the compound BSDE \eqref{eq:compoundBSDE}. For each \(j=1,2,\dots,M\), the \(j\)-th BSDE, together with the associated forward process \(X_t\) on \([T_{j-1},T_j]\), satisfies the following regularity estimate,
\begin{equation}
\sup_{t \in [T_{j-1}, T_j]}
\mathbb{E}\!\left[
\left\|X_t - X_{\pi(t)}\right\|^2
+
\left\|Y_{j,t} - Y_{j,\pi(t)}\right\|^2
\right]
+
\mathbb{E}\!\left[
\int_{T_{j-1}}^{T_j}
\left\|Z_{j,t} - Z_{j,\pi(t)}\right\|^2 \,\mathrm{d}t
\right]
\leq
C h, 
\end{equation}
where $C$ is a constant independent of the time step size $h$.
\end{theorem}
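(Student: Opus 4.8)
The plan is to reduce the statement, one equation at a time, to the classical $L^2$-regularity theorem for \emph{decoupled} forward–backward SDEs with Lipschitz data (see \cite{zhang2004numerical} and \cite[Chapter~5]{zhang2017backward}), and to run the reduction backwards in $j$, exactly as in the proof of Theorem~\ref{thm:pde}. The key observation is that, once the process $Y_{j+1}$ is known, the $j$-th equation in \eqref{eq:compoundBSDE} is a standalone decoupled FBSDE on $[T_{j-1},T_j]$, driven by the same forward process $X$, with the purely deterministic terminal datum $\bar g_j(X_{T_j})$, $\bar g_j(x)=g_j(x,u_{j+1}(T_j,x))$; so the only thing that needs to be tracked through the recursion is the Lipschitz constant of the terminal condition.

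Concretely, I would proceed as follows. First, the forward estimate is global and independent of $j$: under Assumption~\ref{assum1}(ii)–(iv) the Euler scheme for $X$ satisfies $\sup_{t\in[0,T]}\E\|X_t-X_{\pi(t)}\|^2\le Ch$ by the standard strong-error estimate for SDEs with Lipschitz, $\tfrac12$-Hölder coefficients, and this bound is reused verbatim on each subinterval $[T_{j-1},T_j]$. For the base case $j=M$, the terminal datum $\bar g_M=g_M$ is Lipschitz by Assumption~\ref{assum1}, so the classical $L^2$-regularity theorem applies on $[T_{M-1},T_M]$ and gives $\sup_t\E\|Y_{M,t}-Y_{M,\pi(t)}\|^2+\E\int_{T_{M-1}}^{T_M}\|Z_{M,t}-Z_{M,\pi(t)}\|^2\,\mathrm{d}t\le Ch$; it is precisely here that Assumption~\ref{assume: cadlag} enters, being the hypothesis (Lipschitz-in-$t$ diffusion, or uniform ellipticity ensuring a càdlàg modification of $Z$) under which that theorem delivers the rate $O(h)$ for the $Z$-term, while the $X$- and $Y$-terms need only Assumption~\ref{assum1}. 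For the inductive step, assuming the estimate holds for equation $j+1$, Theorem~\ref{thm:pde} tells us $u_{j+1}(T_j,\cdot)$ is uniformly Lipschitz, hence $\bar g_j$ is uniformly Lipschitz as a composition with $g_j$ (Assumption~\ref{assum1}(iii)); the $j$-th equation is then again a decoupled FBSDE with Lipschitz terminal data and Lipschitz, $\tfrac12$-Hölder coefficients, so the same theorem yields the $Y_j$- and $Z_j$-estimates on $[T_{j-1},T_j]$ (alternatively the $Y_j$-bound can be read off from $Y_{j,t}=u_j(t,X_t)$ together with the Lipschitz regularity of $u_j$ and the forward estimate). Combined with the global forward bound this is exactly the claimed inequality. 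Since the compounding times $T_j$ are grid points of $\pi$ by the convention $(T_j-T_{j-1})/N_j=h$, the sub-partitions are genuine restrictions of $\pi$ and no mesh mismatch arises; and since there are only finitely many ($M$) equations, each inheriting its constants from the previous one through a single Lipschitz composition, the $C$'s produced along the way can be replaced by a single constant independent of $h$.

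The main obstacle is not conceptual but a matter of invoking the correct version of the regularity theorem: the terminal data $\bar g_j$ are only Lipschitz — not $C^{1,2}$ — so one must use the Lipschitz-data $L^2$-regularity result (valid under Assumption~\ref{assume: cadlag}) and not a variant that presupposes $u_j\in C^{1,2}$. The second point requiring care is the propagation of the Lipschitz constant of $u_{j+1}$, which governs that of $\bar g_j$ and hence all constants in the $j$-th estimate: one must check that this constant is controlled in terms of the data of equations $j,j+1,\dots,M$ and of $T$ only, with no dependence on $h$, and this is where the finiteness of $M$ is used to rule out any blow-up across the recursion.
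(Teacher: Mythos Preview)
Your proposal is correct and matches the paper's own treatment: the paper does not give a detailed proof of this theorem but simply states that the $L^2$-regularity follows from classical BSDE theory \cite{zhang2004numerical,zhang2017backward} once the well-posedness results of Theorem~\ref{thm:pde} are in hand. Your backward-in-$j$ reduction---using the Lipschitz continuity of $u_{j+1}(T_j,\cdot)$ to make $\bar g_j$ Lipschitz and then invoking the classical estimate on each $[T_{j-1},T_j]$---is exactly the argument the paper leaves implicit, and your remarks on where Assumption~\ref{assume: cadlag} enters and why the constants stay $h$-independent are accurate.
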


In what follows, we first present a proof for the deep BSDE method for equations of the form \eqref{eq:BSDE}, as this will serve as the main building block for establishing our Compound BSDE framework. We note that rigorous analyses of related methods have already been carried out in more complex settings, e.g. coupled forward–backward stochastic differential equations \cite{hanlong2020, negyesi2024generalized}, BSDEs with jumps \cite{bsdejumps}, BSDEs with non-Lipschitz coefficients \cite{jiang2021}, and mean-field type equations \cite{reisinger2024posteriori}. In contrast, our objective here is to provide a streamlined and self-contained proof for the basic case. This allows us to emphasize the essential ideas of our Compound BSDE method, while avoiding many of the technical conditions and complications that arise in the complex setting, particularly those associated with an implicit scheme for the BSDE.

The following assumption forms part of the sufficient conditions for convergence, and we shall further discuss it in a later remark. We also recall that the deep BSDE method \eqref{eq:deepBSDE} is a special case of the Compound BSDE method \eqref{eq:compoundBSDE} with \(M=1\), where all the subscripts \(j\) are omitted as they are not needed for $M=1$. This observation clarifies the use of the assumptions in Theorem \ref{thm:deepBSDE} below.

\begin{assumption}\label{assume:Ky}
The Lipschitz constants of the driver \(f_j\) with respect to \(y\) and \(z\) are sufficiently small.
\end{assumption}

\begin{theorem}[A posteriori error estimate for the deep BSDE method]\label{thm:deepBSDE}
Let Assumptions \ref{assum1}, \ref{assume: cadlag}, and \ref{assume:Ky} hold for the BSDE \eqref{eq:BSDE}. Let $(\mathcal{X},\mathcal{Y},\mathcal{Z})$ denote its solution, and let $(\mathcal{X}^\pi, \mathcal{Y}^\pi, \mathcal{Z}^\pi)$ generated by the deep BSDE method \eqref{eq:deepBSDE}. Then the method satisfies the following \emph{a posteriori} error estimate,
\begin{equation}\label{estimate:deepBSDE}
\sup_{t \in [0,T]}
\mathbb{E} \Big[ \|\mathcal{X}_t - \mathcal{X}_{\pi(t)}^\pi\|^2
+ \|\mathcal{Y}_t - \mathcal{Y}_{\pi(t)}^\pi\|^2
\Big]
+
\int_0^T \mathbb{E} \Big[  \|\mathcal{Z}_t - \mathcal{Z}_{\pi(t)}^\pi\|^2 \Big] \,\mathrm{d}t  
\leq 
C \Big( h + \mathbb{E} \| g(\mathcal{X}_T) - \mathcal{Y}_T^\pi \|^2 \Big),
\end{equation}
where $C$ is a constant independent of the step size $h$.

If there is no direct access to \(g(\mathcal{X}_T)\), it can be replaced by \(g\!\left(\mathcal{X}_T^\pi\right)\) on the right-hand side of \eqref{estimate:deepBSDE}.

\end{theorem}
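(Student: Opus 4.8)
The plan is to follow the classical \emph{a posteriori} analysis of the deep BSDE method, using Theorem~\ref{thm:L2-regularity} in the case $M=1$ as the source of all $O(h)$ terms. First I would dispose of the forward component: under Assumption~\ref{assum1} the standard strong error estimate for the Euler--Maruyama scheme gives $\sup_i\mathbb{E}\|\mathcal{X}_{t_i}-\mathcal{X}^\pi_{t_i}\|^2\le Ch$, and combining this with the $L^2$-regularity of $\mathcal{X}$ yields $\sup_{t\in[0,T]}\mathbb{E}\|\mathcal{X}_t-\mathcal{X}^\pi_{\pi(t)}\|^2\le Ch$, which controls the $\mathcal{X}$-contribution to the left-hand side of \eqref{estimate:deepBSDE}. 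It also settles the closing remark, since by the Lipschitz continuity of $g$ one has $\mathbb{E}\|g(\mathcal{X}_T)-\mathcal{Y}^\pi_T\|^2 \le 2\,\mathbb{E}\|g(\mathcal{X}_T)-g(\mathcal{X}^\pi_T)\|^2 + 2\,\mathbb{E}\|g(\mathcal{X}^\pi_T)-\mathcal{Y}^\pi_T\|^2 \le Ch + 2\,\mathbb{E}\|g(\mathcal{X}^\pi_T)-\mathcal{Y}^\pi_T\|^2$.

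For the backward component I would introduce the continuous-time interpolant of the scheme,
\[
\widetilde{\mathcal{Y}}^\pi_t \coloneqq \mathcal{Y}^\pi_{t_0} - \int_0^t f\bigl(\pi(s),\mathcal{X}^\pi_{\pi(s)},\mathcal{Y}^\pi_{\pi(s)},\mathcal{Z}^\pi_{\pi(s)}\bigr)\,\mathrm{d}s + \int_0^t \mathcal{Z}^\pi_{\pi(s)}\,\mathrm{d}W_s ,
\]
which agrees with the discrete scheme at the grid points ($\widetilde{\mathcal{Y}}^\pi_{t_i}=\mathcal{Y}^\pi_{t_i}$) and is a genuine It\^o process with piecewise-constant diffusion $\mathcal{Z}^\pi_{\pi(\cdot)}$. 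Setting $\delta\mathcal{Y}_t\coloneqq\mathcal{Y}_t-\widetilde{\mathcal{Y}}^\pi_t$ and $\delta\mathcal{Z}_t\coloneqq\mathcal{Z}_t-\mathcal{Z}^\pi_{\pi(t)}$, writing \eqref{eq:BSDE} in forward form, applying It\^o's formula to $\|\delta\mathcal{Y}_t\|^2$ on $[t,T]$, and taking expectations (a localization argument justifies dropping the martingale term), one obtains
\[
\mathbb{E}\|\delta\mathcal{Y}_t\|^2 + \mathbb{E}\!\int_t^T \|\delta\mathcal{Z}_s\|^2\,\mathrm{d}s = \mathbb{E}\|\delta\mathcal{Y}_T\|^2 + 2\,\mathbb{E}\!\int_t^T \bigl\langle \delta\mathcal{Y}_s,\, \Delta f_s\bigr\rangle\,\mathrm{d}s ,
\]
where $\Delta f_s\coloneqq f(s,\mathcal{X}_s,\mathcal{Y}_s,\mathcal{Z}_s)-f(\pi(s),\mathcal{X}^\pi_{\pi(s)},\mathcal{Y}^\pi_{\pi(s)},\mathcal{Z}^\pi_{\pi(s)})$ and $\delta\mathcal{Y}_T=g(\mathcal{X}_T)-\mathcal{Y}^\pi_T$, whose squared expectation is exactly the terminal-mismatch term on the right of \eqref{estimate:deepBSDE} (and, by the estimate above, differs from the trainable objective \eqref{eq:deepBSDE-a} by $O(h)$).

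The remaining work is to bound the cross term. I would split $\Delta f_s$ into a discretization part, $f(s,\mathcal{X}_s,\mathcal{Y}_s,\mathcal{Z}_s)-f(\pi(s),\mathcal{X}_{\pi(s)},\mathcal{Y}_{\pi(s)},\mathcal{Z}_{\pi(s)})$, which is $O(\sqrt h)$ in $L^2$ by the $\tfrac12$-H\"older continuity of $f$ in $t$ and the $L^2$-regularity of $(\mathcal{X},\mathcal{Y},\mathcal{Z})$, and a propagated-error part, $f(\pi(s),\mathcal{X}_{\pi(s)},\mathcal{Y}_{\pi(s)},\mathcal{Z}_{\pi(s)})-f(\pi(s),\mathcal{X}^\pi_{\pi(s)},\mathcal{Y}^\pi_{\pi(s)},\mathcal{Z}^\pi_{\pi(s)})$, which the uniform Lipschitz property bounds by a constant times $\|\mathcal{X}_{\pi(s)}-\mathcal{X}^\pi_{\pi(s)}\|+\|\mathcal{Y}_{\pi(s)}-\mathcal{Y}^\pi_{\pi(s)}\|+\|\mathcal{Z}_{\pi(s)}-\mathcal{Z}^\pi_{\pi(s)}\|$. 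Applying Young's inequality: the $\mathcal{X}$-term is $O(h)$ by the first step; the grid-point $\mathcal{Y}$-term obeys $\mathbb{E}\|\mathcal{Y}_{\pi(s)}-\mathcal{Y}^\pi_{\pi(s)}\|^2\le C\,\mathbb{E}\|\delta\mathcal{Y}_s\|^2+O(h)$ (using $L^2$-regularity of $\mathcal{Y}$ and the smallness of the one-step increments of $\widetilde{\mathcal{Y}}^\pi$) and thus feeds a Gr\"onwall estimate; and the $\mathcal{Z}$-term obeys $\mathbb{E}\int_0^T\|\mathcal{Z}_{\pi(s)}-\mathcal{Z}^\pi_{\pi(s)}\|^2\,\mathrm{d}s\le 2\,\mathbb{E}\int_0^T\|\delta\mathcal{Z}_s\|^2\,\mathrm{d}s+O(h)$. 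A (backward) Gr\"onwall argument, possibly after an exponential reweighting $e^{\beta t}$ to absorb the $\|\delta\mathcal{Y}\|$-integral, then yields $\sup_t\mathbb{E}\|\delta\mathcal{Y}_t\|^2+\mathbb{E}\int_0^T\|\delta\mathcal{Z}_s\|^2\,\mathrm{d}s\le C\bigl(h+\mathbb{E}\|\delta\mathcal{Y}_T\|^2\bigr)$; converting $\delta\mathcal{Y},\delta\mathcal{Z}$ back to $\mathcal{Y}-\mathcal{Y}^\pi_{\pi(\cdot)}$ and $\mathcal{Z}-\mathcal{Z}^\pi_{\pi(\cdot)}$ via Theorem~\ref{thm:L2-regularity} then gives \eqref{estimate:deepBSDE}.

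The step I expect to be the main obstacle is closing the estimate in the $\mathcal{Z}$-component. Since $\mathcal{Z}^\pi_{\pi(\cdot)}$ is only piecewise constant, the propagated $\mathcal{Z}$-error entering through the Lipschitz-in-$z$ term, namely $\|\mathcal{Z}_{\pi(s)}-\mathcal{Z}^\pi_{\pi(s)}\|$, does not match the $\mathbb{E}\int_t^T\|\delta\mathcal{Z}_s\|^2\,\mathrm{d}s$ produced by It\^o's formula (one is frozen at the grid, the other is not), in contrast with the classical \emph{a priori} BSDE estimate, where the two $Z$-integrals coincide. After bridging this gap via $L^2$-regularity and applying Young's inequality, one is left with a term $c\,K_z\,\mathbb{E}\int_t^T\|\delta\mathcal{Z}_s\|^2\,\mathrm{d}s$ on the right-hand side, with $K_z$ the Lipschitz constant of $f$ in $z$ and $c$ an absolute constant; for it to be absorbed into the unit-coefficient term on the left one needs $c\,K_z<1$, which is precisely the role of Assumption~\ref{assume:Ky}. (In principle this could be traded for a more careful choice of the Young parameter together with an exponential weight, but such refinements are exactly what the streamlined proof is designed to avoid.) Beyond this point, the only things requiring care are verifying that every constant is independent of $h$ and the localization needed to discard the martingale term, both of which are routine.
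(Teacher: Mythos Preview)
Your proposal is sound but takes a genuinely different route from the paper. You run the classical backward BSDE a~priori estimate: apply It\^o's formula to $\|\delta\mathcal{Y}_t\|^2$ on $[t,T]$, bound the cross term via Lipschitz and Young, and close with Gr\"onwall. The paper instead argues \emph{forward}: it writes $\Delta\mathcal{Y}_{t_i}=\Delta\mathcal{Y}_{t_0}-\int_0^{t_i}\Delta f_s\,\mathrm{d}s+\int_0^{t_i}\Delta\mathcal{Z}_s\,\mathrm{d}W_s$, first bounds $\max_i\E\|\Delta\mathcal{Y}_{t_i}\|^2$ in terms of $\E\|\Delta\mathcal{Y}_{t_0}\|^2$ and $\E\!\int\|\Delta\mathcal{Z}\|^2$, and then expands the loss $\E\|g(\mathcal{X}_T)-\mathcal{Y}_T^\pi\|^2=\E\|\Delta\mathcal{Y}_T\|^2$ to obtain a \emph{lower} bound for it in terms of those same two quantities, using that $\Delta\mathcal{Y}_{t_0}$ is independent of the Brownian increments to kill one cross term. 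No It\^o's formula and no Gr\"onwall appear; Assumption~\ref{assume:Ky} is precisely what makes the coefficients in that lower bound positive. Your approach buys the comfort of standard BSDE machinery and, notably, would in fact \emph{not} need Assumption~\ref{assume:Ky} with a better Young split (take $2K_z\|\delta\mathcal{Y}\|\,\|\delta\mathcal{Z}\|\le\epsilon\|\delta\mathcal{Z}\|^2+\epsilon^{-1}K_z^2\|\delta\mathcal{Y}\|^2$ with any fixed $\epsilon<1/2$ and let Gr\"onwall swallow the $\|\delta\mathcal{Y}\|^2$ term, regardless of the size of $K_y,K_z$); the paper's approach is more elementary and makes transparent how the training loss dominates the approximation errors.

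One point you gloss over does need care. The claim $\E\|\mathcal{Y}_{\pi(s)}-\mathcal{Y}^\pi_{\pi(s)}\|^2\le C\,\E\|\delta\mathcal{Y}_s\|^2+O(h)$ via ``smallness of the one-step increments of $\widetilde{\mathcal{Y}}^\pi$'' is not free: $\widetilde{\mathcal{Y}}^\pi_s-\widetilde{\mathcal{Y}}^\pi_{\pi(s)}$ contains $\int_{\pi(s)}^s\mathcal{Z}^\pi_{\pi(s)}\,\mathrm{d}W$, and $\mathcal{Z}^\pi$ carries no a~priori bound, so this increment is not $O(h)$ in $L^2$ on its own. The fix is to write $\E\|\mathcal{Z}^\pi_{\pi(s)}\|^2\le 2\,\E\|\mathcal{Z}_s\|^2+2\,\E\|\delta\mathcal{Z}_s\|^2$, so that after integration the increment contributes $O(h)+Ch\,\E\!\int_0^T\|\delta\mathcal{Z}_s\|^2\,\mathrm{d}s$, and the second piece can be absorbed on the left for $h$ small. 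This is routine, but it should be stated rather than assumed.
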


\begin{proof}

For \(0 \le s \le T\), we introduce the following notation,
\begin{equation}
\Delta \mathcal{X}_s \coloneqq \mathcal{X}_s - \mathcal{X}_{\pi(s)}^\pi,\qquad
\Delta \mathcal{Y}_s \coloneqq \mathcal{Y}_s - \mathcal{Y}_{\pi(s)}^\pi,\qquad
\Delta \mathcal{Z}_s \coloneqq \mathcal{Z}_s - \mathcal{Z}_{\pi(s)}^\pi,
\end{equation}
\begin{equation}
\Delta f_s \coloneqq
f(s, \mathcal{X}_s, \mathcal{Y}_s, \mathcal{Z}_s)
-
f\bigl(\pi(s), \mathcal{X}_{\pi(s)}^\pi, \mathcal{Y}_{\pi(s)}^\pi, \mathcal{Z}_{\pi(s)}^\pi\bigr),
\end{equation}
where we recall that \(\pi(s)=t_i\) for \(s \in [t_i, t_{i+1})\).

Let $C$ denote a generic constant that may change from line to line in the rest of the proof, and let \(K_x\), \(K_y\), and \(K_z\) denote the Lipschitz constants of \(f\) with respect to \(x\), \(y\), and \(z\). Applying the Cauchy--Schwarz and Hölder inequalities, together with the Lipschitz continuity of the driver \(f\), we obtain the following useful estimate related to \(\Delta f_s\). For \(0 \le i \le N\),
\begin{equation}\label{estimate:Deltaf}
\begin{aligned}
\E \left\| \int_{t_0}^{t_i} \Delta f_s ds \right\|^2 
& \leq  T \E  \int_{t_0}^{T} \left( h + K_x^2 \|\Delta \mathcal{X}_s\|^2 + K_y^2 \|\Delta \mathcal{Y}_s\|^2 + K_z^2 \|\Delta \mathcal{Z}_s\|^2   \right) d s \\
& \leq  T^2 h + C T^2 K_x^2 h +   T K_y^2 \E \left[ \int_{t_0}^{T} \|\Delta \mathcal{Y}_s\|^2 ds \right] +  T K_z^2 \E  \left[ \int_{t_0}^{T} \|\Delta \mathcal{Z}_s\|^2   d s  \right] \\
& \leq  C h  +  T K_z^2 \E  \left[ \int_{t_0}^{T} \|\Delta \mathcal{Z}_s\|^2 ds \right] +  T K_y^2  \E \left[ \int_{t_0}^{T} 2 \|\mathcal{Y}_s - \mathcal{Y}_{\pi(s)} \|^2 + 2 \|\mathcal{Y}_{\pi(s)} - \mathcal{Y}_{\pi(s)}^\pi \|^2  ds  \right]   \\
& \leq  C h  +  T K_z^2 \E  \left[ \int_{t_0}^{T} \|\Delta \mathcal{Z}_s\|^2 ds \right] +  2 T K_y^2  \E \left[ \int_{t_0}^{T}  \|\mathcal{Y}_{\pi(s)} - \mathcal{Y}_{\pi(s)}^\pi \|^2  ds  \right]   \\
& \leq  C h  +  T K_z^2 \E  \left[ \int_{t_0}^{T} \|\Delta \mathcal{Z}_s\|^2 ds \right] +  2 T^2 K_y^2   \max_{0\leq i\leq N-1}  \E \left[ \|\mathcal{Y}_{t_i} - \mathcal{Y}_{t_i}^\pi \|^2  \right] .
\end{aligned}
\end{equation}
where we used Theorem \ref{thm:L2-regularity} for the \(L^2\)-regularity of \(\mathcal{X}\) and \(\mathcal{Y}\) in the second and forth inequalities, respectively.

Consequently, using It\^o isometry and \eqref{estimate:Deltaf}, for $0\leq i \leq N$ we have,
\begin{equation}
\begin{aligned}
\E \|\Delta \mathcal{Y}_{t_i}\|^2  
& = \E \left\| \Delta \mathcal{Y}_{t_0} - \int_{t_0}^{t_i} \Delta f_s \mathrm{d}s + \int_{t_0}^{t_i} \Delta \mathcal{Z}_s \mathrm{d}W_s \right\|^2 \\
& \leq 3  \E \|\Delta \mathcal{Y}_{t_0} \|^2 +  3 \E \left\| \int_{t_0}^{t_i} \Delta f_s \mathrm{d}s \right\|^2 + 3 \E \left\| \int_{t_0}^{t_i} \Delta \mathcal{Z}_{s}  \mathrm{d}W_s \right\|^2  \\
& \leq 3 \E \|\Delta \mathcal{Y}_{t_0} \|^2 
+  3\left( C h  +  T K_z^2 \E  \left[ \int_{t_0}^{T} \|\Delta \mathcal{Z}_s\|^2 ds \right] +  2 T^2 K_y^2  \max_{0\leq i\leq N-1}  \E \|\Delta \mathcal{Y}_{t_i} \|^2  \right) 
+ 3 \E \left[ \int_{t_0}^{t_i} \|\Delta \mathcal{Z}_{s}\|^2  \mathrm{d} s \right]   \\ 
& = Ch + 3 \E \|\Delta \mathcal{Y}_{t_0} \|^2 + (3 TK_z^2 + 3) \E \left[ \int_{t_0}^{T} \|\Delta \mathcal{Z}_{s}\|^2  \mathrm{d} s \right] + 6T^2 K_y^2  \max_{0\leq i\leq N-1}  \E \|\Delta \mathcal{Y}_{t_i} \|^2 .
\end{aligned}
\end{equation}
Taking the maximum over \(0 \le i \le N-1\) on the left-hand side of the above inequality, and assuming that \(6 T^2 K_y^2 < 1\), which is ensured when Assumption~\ref{assume:Ky} holds to a sufficient extent, we can rearrange the terms to obtain 
\begin{equation}\label{estimate:maxY}
\max_{0 \leq i \leq N-1} \E \left\|\Delta \mathcal{Y}_{t_i} \right\|^2
\leq  C h + \frac{3}{1-6T^2 K_y^2 } \E \left\| \Delta \mathcal{Y}_{t_0} \right\|^2  + 
\frac{3\left(1+T K_z^2 \right)}{1-6 T^2 K_y^2 } \E \left[ \int_0^T \left\|\Delta \mathcal{Z}_s \right\|^2 \mathrm{d}s \right] .
\end{equation}
It now remains to bound the second and third terms on the right-hand side of \eqref{estimate:maxY} in terms of the objective functional
\(\mathbb{E}\bigl \| g(\mathcal{X}_T) - \mathcal{Y}_T^\pi \bigr\|^2\) used in the deep BSDE method \eqref{eq:deepBSDE}. Recall that $g(\mathcal{X}_T) = \mathcal{Y}_T$, then we can derive
\begin{equation}\label{estimate:lossfun}
\begin{aligned}
& \E \left\| g(\mathcal{X}_T) - \mathcal{Y}_{T}^\pi \right\|^2 \\
= & \E \left\| \Delta \mathcal{Y}_{t_0} - \int_{t_0}^{T} \Delta f_s \mathrm{d}s + \int_{t_0}^{T} \Delta \mathcal{Z}_s  \mathrm{d}W_s \right\|^2  \\
= & \E \left\| \Delta \mathcal{Y}_{t_0} - \int_{t_0}^{T} \Delta f_s \mathrm{d}s \right\|^2 
+ 2 \E  \left[ \left( \Delta \mathcal{Y}_{t_0} - \int_{t_0}^{T} \Delta f_s \mathrm{d}s  \right)^\top \left( \int_{t_0}^{T} \Delta \mathcal{Z}_s \mathrm{d}W_s \right)  \right] 
+ \E \left\| \int_{t_0}^{T} \Delta \mathcal{Z}_s \mathrm{d}W_s \right\|^2  \\
= & \E \left\| \Delta \mathcal{Y}_{t_0} - \int_{t_0}^{T} \Delta f_s \mathrm{d}s \right\|^2 
- 2 \E \left[ \left(\int_{t_0}^T \Delta f_s \mathrm{d}s \right)^\top \left(\int_{t_0}^T \Delta \mathcal{Z}_s \mathrm{d}W_s \right)   \right]  + \E \left[\int_{t_0}^{T} \|\Delta \mathcal{Z}_s\|^2 \mathrm{d}s  \right]\\
\geq & (1 - \epsilon) \E \left\| \Delta \mathcal{Y}_{t_0}  \right\|^2 + (1 - \epsilon^{-1} - \delta^{-1}) \E \left\| \int_{t_0}^{T} \Delta f_s \mathrm{d}s \right\|^2 
+ (1-\delta) \E \left[ \int_{t_0}^{T} \left\| \Delta \mathcal{Z}_s \right\|^2 \mathrm{d}s  \right] ,
\end{aligned}
\end{equation}
where we use the fact that $\Delta \mathcal{Y}_{t_0}$ is independent with $W_s$ for $s\geq 0$, and the constants $\epsilon>0$ and $\delta>0$ arise from the application of Young's inequality.

Since our goal is to bound the terms $\E \left\| \Delta \mathcal{Y}_{t_0} \right\|^2$ and $\E \int_{t_0}^T \|\Delta \mathcal{Z}_s\|^2 \mathrm{d}s$ in terms of $\E \left\|g(\mathcal{X}_T)-\mathcal{Y}_{T}^\pi \right\|^2$, we choose $0 < \epsilon < 1$ and $0< \delta < 1$ so that the corresponding coefficients are positive. Then, using \eqref{estimate:Deltaf}, \eqref{estimate:maxY}, and noticing that $(1-\epsilon^{-1} - \delta^{-1} ) <0$, we have
\begin{equation}
\begin{aligned}
& (1-\epsilon^{-1} - \delta^{-1})  \E \left\| \Delta \int_{t_0}^T f_s \mathrm{d}s \right\|^2  \\
\geq & (1-\epsilon^{-1} - \delta^{-1}) \left( Ch + T K_z^2 \E  \left[ \int_{t_0}^{T}  \|\Delta \mathcal{Z}_s\|^2 ds \right] +  2 T^2 K_y^2  \max_{0\leq i\leq N-1}  \E \|\Delta \mathcal{Y}_{t_i}\|^2 \right)   \\   
\geq & (1-\epsilon^{-1} - \delta^{-1})  \left( Ch + T K_z^2 \E  \left[ \int_{t_0}^{T} \|\Delta  \mathcal{Z}_s\|^2 ds \right] +  2 T^2 K_y^2  \left( Ch + \frac{3}{1-6T^2 K_y^2} \E \left\| \Delta \mathcal{Y}_{t_0} \right\|^2  \right. \right.\\
& \qquad \left. \left. + 
\frac{3\left(1+T K_z^2 \right)}{1-6 T^2 K_y^2} \E \left[ \int_0^T \|\Delta \mathcal{Z}_s\|^2  \mathrm{d}s  \right] \right)  \right) . 
\end{aligned}
\end{equation}
Substituting this lower bound into to \eqref{estimate:lossfun} yields,
\begin{equation}\label{estimate:final}
\begin{aligned}
\E \left\| g(\mathcal{X}_T) - \mathcal{Y}_{T}^\pi \right\|^2 
& \geq (1-\epsilon^{-1} - \delta^{-1}) Ch 
+ \left[ 1-\epsilon + (1-\epsilon^{-1} - \delta^{-1}) \frac{6 T^2 K_y^2 }{1-6T^2 K_y}  \right]   \E \left\| \Delta \mathcal{Y}_{t_0} \right\|^2  \\
& \quad + \left[ (1-\epsilon^{-1} - \delta^{-1}) \left( T K_z^2 + 2 T^2 K_y^2 \frac{3+3TK_z^2}{1-6T^2K_y^2} \right) + 1 -\delta \right]  \E \left[  \int_{t_0}^{T} \left\| \Delta \mathcal{Z}_s \right\|^2 \mathrm{d}s  \right] .
\end{aligned}
\end{equation}
Finally, to obtain the desired estimate, the coefficients in front of
\(\mathbb{E}\!\int_{t_0}^{T} \left\| \Delta \mathcal{Z}_s \right\|^2 \mathrm{d}s\)
and
\(\mathbb{E} \left\| \Delta \mathcal{Y}_{t_0} \right\|^2\)
in \eqref{estimate:final} must be positive. In addition to the conditions
\(0<\varepsilon<1\) and \(0<\delta<1\), this leads to the following conditions 
\begin{equation}\label{ineq:conditions}
(1-\epsilon^{-1} - \delta^{-1}) \left( T K_z^2 + 2 T^2 K_y^2 \frac{3+3TK_z^2}{1-6T^2K_y^2} \right) + 1 - \delta  >0 , \quad
1-\epsilon + (1-\epsilon^{-1} - \delta^{-1} )  \frac{6 T^2 K_y^2}{1-6T^2 K_y^2}   > 0 
\end{equation}
These conditions can, for instance, be satisfied by choosing
\(\varepsilon=\delta=\tfrac12\) and requiring
\[
T K_z^2
+
2 T^2 K_y^2 \frac{3+3T K_z^2}{1-6T^2 K_y^2}
< \frac{1}{6},
\]
which is clearly achievable when \(K_y\) and \(K_z\) are sufficiently small.
 
We move the negative term \((1-\varepsilon^{-1}-\delta^{-1})\,C h\) in \eqref{estimate:final} to the left-hand side, and invoke the \(L^2\)-regularity of the forward process \(\mathcal{X}\) to complete the proof. For the case \(\mathbb{E}\!\left\| g(\mathcal{X}_T^\pi) - \mathcal{Y}_T^\pi \right\|^2\), the result follows from the triangle inequality together with the Lipschitz continuity of \(g\).

\end{proof}

\begin{remark}
We note that Assumption~\ref{assume:Ky} can, in principle, be made fully explicit by deriving concrete upper bounds for the Lipschitz constants \(K_y\) and \(K_z\). However, obtaining optimal or near-optimal bounds in this way leads to a rather involved optimization problem.

Although this is not the main focus of the present paper, we briefly comment on how to set up this optimization problem. From the proof of Theorem~\ref{thm:deepBSDE}, the required conditions on \(K_y\) and \(K_z\) essentially arise from two sources: 
(i) the repeated application of Young’s inequality to control product terms, and 
(ii) the sign conditions imposed on certain coefficients to guarantee positivity or negativity where required. 

Therefore, one may introduce several auxiliary parameters \(\lambda_i\), in addition to $\epsilon$ and $\delta$, when applying Young’s inequality throughout the proof. This leads to a system of inequalities involving \(K_y\), \(K_z\), \(T\), \(\lambda_i\), \(\varepsilon\), and \(\delta\), which must be satisfied simultaneously, or a constrained optimisation problem if we aim to find the largest admissible values of \(K_y\) and \(K_z\) and $T$.
\end{remark}

With these results in place, we are now ready to give a simple proof of our method, which exploits structural properties of the deep BSDE method for a single BSDE, and the Lipschitz continuity of the compounding conditions $g_j$.

\begin{theorem}[Convergence of the Compound BSDE method]\label{main-thm}
Let Assumptions \ref{assum1}, \ref{assume: cadlag}, and \ref{assume:Ky} hold, and $(X, Y, Z)$ be the solution to \eqref{eq:compoundBSDE}. Then the approximated solution $(X^\pi, Y^\pi, Z^\pi)$ obtained by the Compound BSDE method \eqref{method:compoundBSDE} satisfies the following \emph{a posteriori} error estimate,
\begin{equation}\label{estimate:compoundBSDE}
\begin{aligned}
& \sup_{t \in [0, T]}  \mathbb{E}  \|X_{t} - X_{\pi(t)}^\pi \|^2 + 
\sum_{j=1}^M
\left( 
\sup_{t \in [T_{j-1}, T_j] }  \mathbb{E} \|Y_{j, t} - Y_{j, \pi(t)}^\pi \|^2 + 
\int_{T_{j-1}}^{T_j} \mathbb{E} \|Z_{j, t} - Z_{j, \pi(t)}^\pi \|^2  \mathrm{~d} t
\right) \\
&\hspace{4cm}  \leq C \left( h + \sum_{j=1}^{M-1} \mathbb{E} \|g_j( X_{T_{j}}^\pi, Y_{j+1, T_j}^\pi) - Y_{j, T_j}^\pi\|^2 + \mathbb{E} \|g_M(X_{T}^\pi) - Y_{M, T}^\pi\|^2  \right),
\end{aligned}
\end{equation}
where $C$ is a constant independent of the step size $h$.
\end{theorem}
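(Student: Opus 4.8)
The plan is to establish \eqref{estimate:compoundBSDE} by \emph{backward induction} over the blocks $j=M,M-1,\dots,1$, using the a posteriori estimate of Theorem~\ref{thm:deepBSDE} as a black box at each step. The structural observation driving the proof is that, for each fixed $j$, the $j$-th row of the constraint system in \eqref{method:compoundBSDE} together with the forward Euler recursion is \emph{precisely} the deep BSDE method applied to the single BSDE $(X,Y_j,Z_j)$ of \eqref{eq:compoundBSDE} on the sub-interval $[T_{j-1},T_j]$, started from the square-integrable (and generally random) initial state $X_{T_{j-1}}$. Theorem~\ref{thm:deepBSDE} carries over to this setting. It is stated on $[0,T]$ with deterministic initial data, but the only role of determinism in its proof is to make $\Delta\mathcal{Y}_{t_0}$ independent of the driving Brownian increments; here this becomes independence of $\Delta Y_{j,T_{j-1}}$ from $\{W_s-W_{T_{j-1}}\}_{s\ge T_{j-1}}$, which holds because $Y_{j,T_{j-1}}$ and $Y_{j,T_{j-1}}^\pi$ are both $\mathcal{F}_{T_{j-1}}$-measurable, and replacing $[0,T]$ by $[T_{j-1},T_j]$ only changes $C$ through the finitely many interval lengths. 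The forward term is handled once and for all: classical strong error estimates for the Euler--Maruyama scheme give $\sup_{t\in[0,T]}\mathbb{E}\|X_t-X_{\pi(t)}^\pi\|^2\le Ch$, so it remains to control the $Y_j$- and $Z_j$-contributions block by block.

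\emph{Base case $j=M$.} The terminal condition $g_M(X_T)$ depends only on the forward process, so Theorem~\ref{thm:deepBSDE} on $[T_{M-1},T_M]$, in the variant where $g_M(X_T)$ is replaced by $g_M(X_T^\pi)$, gives directly
\[
\sup_{t\in[T_{M-1},T_M]}\mathbb{E}\|Y_{M,t}-Y_{M,\pi(t)}^\pi\|^2+\int_{T_{M-1}}^{T_M}\mathbb{E}\|Z_{M,t}-Z_{M,\pi(t)}^\pi\|^2\,\mathrm{d}t\le C\bigl(h+\mathbb{E}\|g_M(X_T^\pi)-Y_{M,T}^\pi\|^2\bigr).
\]

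\emph{Inductive step.} Assume the analogous per-block estimate for $j+1,\dots,M$, with right-hand side $C\bigl(h+\sum_{k=j+1}^{M-1}\mathbb{E}\|g_k(X_{T_k}^\pi,Y_{k+1,T_k}^\pi)-Y_{k,T_k}^\pi\|^2+\mathbb{E}\|g_M(X_T^\pi)-Y_{M,T}^\pi\|^2\bigr)$. Apply Theorem~\ref{thm:deepBSDE} to block $j$ on $[T_{j-1},T_j]$; since the true terminal value of the $j$-th BSDE is $Y_{j,T_j}=g_j(X_{T_j},Y_{j+1,T_j})$, its a posteriori bound reads
\[
\sup_{t\in[T_{j-1},T_j]}\mathbb{E}\|Y_{j,t}-Y_{j,\pi(t)}^\pi\|^2+\int_{T_{j-1}}^{T_j}\mathbb{E}\|Z_{j,t}-Z_{j,\pi(t)}^\pi\|^2\,\mathrm{d}t\le C\bigl(h+\mathbb{E}\|g_j(X_{T_j},Y_{j+1,T_j})-Y_{j,T_j}^\pi\|^2\bigr).
\]
It remains to trade the true terminal $g_j(X_{T_j},Y_{j+1,T_j})$ for the quantity $g_j(X_{T_j}^\pi,Y_{j+1,T_j}^\pi)$ appearing in the objective of \eqref{method:compoundBSDE}. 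Each $T_j$ is a grid point, so $X_{T_j}^\pi=X_{\pi(T_j)}^\pi$ and $Y_{j+1,T_j}^\pi=Y_{j+1,\pi(T_j)}^\pi$; using $\|a+b\|^2\le 2\|a\|^2+2\|b\|^2$ and the uniform Lipschitz continuity of $g_j$,
\[
\mathbb{E}\|g_j(X_{T_j},Y_{j+1,T_j})-Y_{j,T_j}^\pi\|^2\le C\,\mathbb{E}\|X_{T_j}-X_{T_j}^\pi\|^2+C\,\mathbb{E}\|Y_{j+1,T_j}-Y_{j+1,T_j}^\pi\|^2+2\,\mathbb{E}\|g_j(X_{T_j}^\pi,Y_{j+1,T_j}^\pi)-Y_{j,T_j}^\pi\|^2.
\]
The first term is $\le Ch$ by the Euler estimate; the second is $\le\sup_{t\in[T_j,T_{j+1}]}\mathbb{E}\|Y_{j+1,t}-Y_{j+1,\pi(t)}^\pi\|^2$, hence by the inductive hypothesis for block $j+1$ is $\le C\bigl(h+\sum_{k=j+1}^{M-1}\mathbb{E}\|g_k(X_{T_k}^\pi,Y_{k+1,T_k}^\pi)-Y_{k,T_k}^\pi\|^2+\mathbb{E}\|g_M(X_T^\pi)-Y_{M,T}^\pi\|^2\bigr)$; the third is exactly the $j$-th objective term. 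Substituting back establishes the per-block estimate for block $j$ with right-hand side $C\bigl(h+\sum_{k=j}^{M-1}\mathbb{E}\|g_k(X_{T_k}^\pi,Y_{k+1,T_k}^\pi)-Y_{k,T_k}^\pi\|^2+\mathbb{E}\|g_M(X_T^\pi)-Y_{M,T}^\pi\|^2\bigr)$, closing the induction.

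\emph{Conclusion and main obstacle.} Summing the $M$ per-block estimates and adding the forward bound yields \eqref{estimate:compoundBSDE}: a given objective term is counted in the bounds of all blocks $j\le k$, but there are only $M$ blocks, so its total multiplicity is $\le M$ and is absorbed into $C$. The only genuinely non-routine point is the circular dependence of the terminal condition of block $j$ on the unknown process $Y_{j+1}$; a forward or simultaneous estimate would be circular, whereas the backward ordering of the induction breaks the loop, since by the time block $j$ is treated the error of $Y_{j+1}$ at the compounding time $T_j$ is already controlled. A secondary point to verify is that the smallness of the driver Lipschitz constants required by Assumption~\ref{assume:Ky} is governed, for block $j$, by the sub-interval length $T_j-T_{j-1}\le T$, so smallness relative to the full horizon $T$ suffices for all blocks simultaneously.
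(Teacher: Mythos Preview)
Your proposal is correct and follows essentially the same approach as the paper: apply Theorem~\ref{thm:deepBSDE} block by block, use the Lipschitz continuity of $g_j$ together with the forward Euler bound to replace $g_j(X_{T_j},Y_{j+1,T_j})$ by the objective term $g_j(X_{T_j}^\pi,Y_{j+1,T_j}^\pi)$, and resolve the cross-block coupling recursively from $j=M$ downward. Your presentation is somewhat more explicit than the paper's---you frame the recursion as a formal backward induction and you spell out why Theorem~\ref{thm:deepBSDE} still applies on $[T_{j-1},T_j]$ with the random, $\mathcal{F}_{T_{j-1}}$-measurable initial data---but the underlying argument is the same.
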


\begin{proof}

We apply Theorem \ref{estimate:deepBSDE} to the $j$-th BSDE in the Compound BSDE formulation \eqref{eq:compoundBSDE}, for $j=1, 2, \ldots, M-1$,
\begin{equation}\label{ineq:main}
\begin{aligned}
& \sup_{t \in [T_{j-1}, T_j] }  \left( \mathbb{E} \|X_{j, t} - X_{j, \pi(t)}^\pi \|^2  +  \mathbb{E} \|Y_{j, t} - Y_{j, \pi(t)}^\pi \|^2  \right) + \int_{T_{j-1}}^{T_j} \mathbb{E} \|Z_{j, t} - Z_{j, \pi(t)}^\pi \|^2  \mathrm{~d} t   \\
\leq & C \mathbb{E} \|Y_{j, T_j}^{\pi} - g_j(X_{T_j}, Y_{j+1, T_j}) \|^2  + Ch  \\
\leq & C \mathbb{E} \|Y_{j, T_j }^{\pi} - g_j(X_{T_j}^\pi, Y_{j+1, T_j }^{\pi}) \|^2 + C \mathbb{E} \| g_j( X_{T_j}^\pi, Y_{j+1, T_j}^{\pi} )  - g_j( X_{T_j}, Y_{j+1, T_j} ) \|^2  + Ch \\
\leq & C \mathbb{E} \|Y_{j, T_j }^{\pi} - g_j( X_{T_j}^\pi, Y_{j+1, T_j}^{\pi} ) \|^2  + C \mathbb{E} \| Y_{j+1, T_j}^{\pi} - Y_{j+1,T_j} \|^2 + C \mathbb{E} \| X_{T_j}^\pi -  X_{T_j} \|^2 + Ch  \\
\leq & C \mathbb{E} \|Y_{j, T_j }^{\pi} - g_j( X_{T_j}^\pi, Y_{j+1, T_j}^{\pi} ) \|^2 + C \mathbb{E} \| Y_{j+1, T_j}^{\pi} - Y_{j+1,T_j} \|^2  + Ch ,
\end{aligned}
\end{equation}
where $C$ denote a generic constant that may change from line to line, and we use the triangle inequality in the second inequality to split the term, since we do not have access to $Y_{j+1, T_j}$, which is part of the solution of \eqref{eq:compoundBSDE} that needs to be solved for. The remaining derivations follow from the Lipschitz continuity of $g_j$ and the $L^2$-regularity of the $X$ process.  

For the case $j=M$, note that typically we have access to the samples for $g_M(X_T^\pi)$ via numerical methods for the SDE of $X$, and thus applying Theorem \ref{estimate:deepBSDE} again gives,
\begin{equation}\label{ineq:main2}
\begin{aligned}
& \sup_{t \in [T_{M-1}, T_M] }  \left( \mathbb{E} \|X_{M, t} - X_{M, \pi(t)}^\pi \|^2 + \mathbb{E}  \|Y_{M, t} - Y_{M, \pi(t)}^\pi \|^2  \right) + \int_{T_{M-1}}^{T_M} \mathbb{E}  \|Z_{M, t} - Z_{M, \pi(t)}^\pi  \|^2  \mathrm{~d} t  \\
\leq & C \left( \mathbb{E} \|Y_{M, T_M }^{\pi} - g_j(X_{T_M}^\pi)  \|^2 + h \right) .
\end{aligned}
\end{equation}

Fixing some $1\leq j\leq M-2$, we notice that the second term in the last inequality of \eqref{ineq:main} can be controlled by using the estimate \eqref{ineq:main} for the $(j+1)$-th BSDE, as this second term in the $j$-th case will appear on the left-hand side in the $j+1$-case. Through a similar recursive argument, all these second terms in the last lines of \eqref{ineq:main}, for $j=1,2\cdots, M-1$, will be bounded by the right-hand-side of \eqref{ineq:main2}, provided that all the first terms $\mathbb{E} \|Y_{j, T_j }^{\pi} - g_j( X_{T_j}^\pi, Y_{j+1, T_j}^{\pi} ) \|$ in \eqref{ineq:main} can be controlled. Therefore, we only need to keep all these first terms in \eqref{ineq:main}  and the right-hand side in \eqref{ineq:main2}, and obtain the desired estimate. 

\end{proof}

\begin{remark}
Inspecting the proof, we observe that it is possible to include the $Z$ process in $g_j$ for $1 \le j \le M-1$, in the compound BSDE formulation \eqref{eq:compoundBSDE}. This extension, however, would require stronger regularity assumptions on the decoupling field associated with the $Z$ process; for example, one may assume that the function $(\nabla_x u_j(t, X_t))^{\mathrm{T}} \sigma(t, X_t)$ is uniformly Lipschitz continuous in $X_t$ for all $j$. A rigorous treatment of this extension is left for future research.
\end{remark}

\section{Applications in derivatives pricing}\label{sec4}

In this section, we illustrate how the compound BSDE formulation~\eqref{eq:compoundBSDE} and the associated \emph{fully forward} method~\eqref{method:compoundBSDE} can be used for option pricing and optimal stopping problems. The key ingredient is the flexibility in the choice of the compounding functions $g_j$. Unlike standard BSDE formulations, where only the terminal condition at $T$ is prescribed, our framework allows one to impose conditions at intermediate times $T_j$ as well. This makes it possible to represent a wide range of payoff structures and early-exercise features within a single template.

We next discuss several representative choices of $g_j$ and explain the corresponding pricing or stopping problems.

\case{$g_j(x,y)=y$ for all $j$.}\label{case1}

With $g_j(x,y)=y$, the compound BSDE~\eqref{eq:compoundBSDE} collapses to a standard BSDE of the form~\eqref{eq:BSDE}, and the method~\eqref{method:compoundBSDE} reduces to the classical deep BSDE method~\eqref{eq:deepBSDE}. Indeed, the intermediate loss terms
$$ 
\E\bigl \|Y_{j,T_j}^\pi-g_j( Y_{j+1,T_j}^\pi)\bigr\|^2
=\E\bigl\|Y_{j,T_j}^\pi-Y_{j+1,T_j}^\pi\bigr\|^2
$$
enforce consistency across the sub-intervals; when these terms are small, we have
$Y_{j,T_j}^\pi\approx Y_{j+1,T_j}^\pi$ for $j=1,\dots,M-1$.
In that regime, the dominant contribution to the loss is the terminal term
$$
\E\bigl\|Y_{M,T_M}^\pi-g_M(X_{T_M}^\pi)\bigr\|^2,
$$
which is exactly the deep BSDE objective.

As a canonical example, consider the pricing of a European put option in the Black--Scholes--Merton model under the risk-neutral measure. Let $X$ follow a geometric Brownian motion and choose
$f_j(t,x,y,z)=- ry$ for all $j$, where $r$ is the risk-free rate. With terminal condition $g_M(x)=(K-x)^+$, the BSDE solution satisfies $Y_t=V(t,X_t)$ and
$Z_t=V_x(t,X_t)^\top \sigma(t,X_t)$ by the nonlinear Feynman--Kac representation \eqref{eq:feynman-kac}, so that the processes $(Y,Z)$ correspond to the option price and the scaled option delta over the whole time horizon $[0, T]$, respectively.

\case{$g_j(x,y)=\max(y-K_j,0)$ for constants $K_j$ and $1\leq j\leq M-1$.}\label{case2}

For $g_j(x,y)=(y-K_j)^+ \equiv \max(y-K_j, 0)$, the compound BSDE \eqref{eq:compoundBSDE} can be interpreted as a discrete-time reflection mechanism at the prescribed times $T_j$, with constant thresholds $K_j$. In particular, for $j=1,\dots,M-1$, the compounding condition
\[
Y_{j,T_j}=(Y_{j+1,T_j}-K_j)^+
\]
maps the ''next-stage value'' $Y_{j+1,T_j}$ into the previous stage through a payoff-type nonlinearity.

A natural application is the pricing of compound options. For illustration, consider a call-on-call option with $M=2$. Let $V_{\mathrm{coc}}(t, x_t)$ denote the call-on-call option price at $(t, x_t)$ with maturity $T_1$, and let $V_{\mathrm{c}}(t, x_t)$ denote the vanilla European call option price at $(t, x_t)$ with maturity $T_2>T_1$. At time $T_1$, the call-on-call option holder may buy the vanilla call by paying the strike $K_1$, hence the holder's payoff at $T_1$ is
\[
V_{\mathrm{coc}}(T_1, x) = (V_{\mathrm{c}}(T_1, x)-K_1)^+.
\]
If exercised, the remaining position is the vanilla call with a terminal payoff
$V_{\mathrm{c}}(T_2, x)=(X_{T_2}-K_2)^+$.

Compound options also arise naturally as real options in corporate finance, energy investment, and R\&D investment. In such settings, a call-on-call represents the right to acquire an investment opportunity at a future date rather than committing immediately. Paying an initial premium preserves the flexibility to invest only if conditions evolve favourably, thereby limiting downside risk while retaining upside potential. This captures the value of waiting, learning, and managerial flexibility under uncertainty.

To apply~\eqref{method:compoundBSDE} in the Black--Scholes--Merton framework, we take the same geometric Brownian motion $X$ and the driver $f_j(t,x,y,z)=-ry$ as in Case~\ref{case1}. Then, by the Feynman--Kac representation, $(Y_1,Y_2)$ corresponds to $(V_{\mathrm{coc}},V_{\mathrm{c}})$ in the call-on-call example. The same construction extends directly to other compound options and to $M$-fold compound options with $M\ge2$; for instance, a put-on-put can be handled by replacing the compounding payoff with $g_j(x,y)=(K_j-y)^+$ while keeping the dynamics and driver unchanged.

\case{$g_j(x,y) = \max\bigl(y,(x-K_j)^+\bigr)$ for constants $K_j$ and $1\leq j\leq M-1$.} \label{case3}

This choice generalises Case~\ref{case2} by allowing the compounding condition to incorporate an \emph{immediate exercise value} $(x-K_j)^+$ that depends on the state $X_{T_j}$. The resulting condition compares continuation and exercise values at each $T_j$, which is precisely the structure of Bermudan-style optimal stopping.

To see this, consider a Bermudan call option with exercise dates
$\mathcal{T}\coloneqq\{T_j:\,j=1,\dots,M\}$.
At each date $T_j$, the option value satisfies the dynamic programming principle
\begin{equation}\label{case3_cond}
Y_{j,T_j}=\max\!\Bigl(Y_{j+1,T_j},\, (X_{T_j}-K_j)^+\Bigr),
\end{equation}
where $Y_{j+1,T_j}$ represents the continuation value. Therefore, if the loss in~\eqref{method:compoundBSDE} is sufficiently small, the compounding conditions enforce~\eqref{case3_cond} to high accuracy, and the resulting learned solution provides the Bermudan price. An American option can be approached by refining the grid of exercise times, so that the Bermudan approximation becomes accurate.

\begin{remark}
In light of the above discussion, it is natural to relate our compound BSDE method \eqref{method:compoundBSDE} to the so-called discretely reflected BSDEs \cite{el1997backward, zhang2017backward}. To illustrate this connection, consider the Bermudan option pricing problem described in Case \ref{case3} and use the same set $\mathcal{T}$. This problem can be modelled by the following discretely reflected BSDE,
\begin{equation}\label{eq:reflectedBSDE}
\left\{
\begin{aligned}
Y_{T_M} & = g_M(X_{T_M}), \\
\widetilde{Y}_t
&= Y_{T_j}
+ \int_t^{T_j} r\, Y_s \, ds
- \int_t^{T_j} Z_s \, dW_s,
\qquad t \in [T_{j-1}, T_j), \\
Y_{T_{j-1}}
&= \max \!\Big( \widetilde{Y}_{T_{j-1}}, (X_{T_{j-1}} - K_j )^+ \Big)
\;\coloneqq\;
\mathcal{R}_{j-1}\!\big(
X_{T_{j-1}}, \widetilde{Y}_{T_{j-1}}
\big).
\end{aligned}
\right.
\end{equation}
Such equations are typically solved in a backward manner; see the references discussed in Section \ref{sec:intro}.

Comparing \eqref{eq:reflectedBSDE} with the compound BSDE formulation \eqref{eq:compoundBSDE}, we observe that the reflection operator $\mathcal{R}_{j-1}$ plays essentially the same role as the function $g_{j-1}$ in our framework. From this viewpoint, our method can also be interpreted as a numerical scheme for discretely reflected BSDEs of the form \eqref{eq:reflectedBSDE} with Lipschitz $\mathcal{R}_{j-1}$. Moreover, the reason why our algorithm can be implemented in a forward training manner becomes clear from this perspective: instead of enforcing the hard constraint via directly computing  
\[
Y_{T_{j-1}} = \mathcal{R}_{j-1}\!\big(
X_{T_{j-1}}, \widetilde{Y}_{T_{j-1}}
\big),
\]
as done, for example, by the max method in \cite{gobet2008numerical}, or methods in \cite{gao2023convergence} \cite{hure2020deep}, we adopt a relaxed formulation by incorporating this reflection relation into the objective functional.
\end{remark}

\section{Numerical examples}\label{sec5}

In this section, we examine the accuracy and efficiency of the proposed Compound BSDE method for several option pricing problems.  In particular, we include Bermudan-style options in high dimensions.

To assess accuracy, we consider examples under geometric Brownian motion (GBM) dynamics. In this setting, compound options admit analytical formulas, and reliable benchmark methods are available for Bermudan options. Let $X_t=(X_t^1,\dots,X_t^{d_1})^\top$ denote the underlying asset-price vector. Under the risk-neutral measure, we assume that $X$ follows
\begin{equation}\label{sde-gbm}
d X_t = \operatorname{diag}(X_t)\bigl((r\mathbf{1}_{d_1}-q)\,dt+\Sigma\,dW_t\bigr),
\qquad X_0=x_0,
\end{equation}
where $\mathbf{1}_{d_1}\in\R^{d_1}$ is the vector of ones, $r$ is the risk-free rate, $q=(q_1,\dots,q_{d_1})^\top$ is the vector of continuous dividend yields, $\Sigma\in\R^{d_1\times d_1}$ is the volatility matrix, and $W_t$ is a $d_1$-dimensional standard Brownian motion with independent components and thus $d_3=d_1$. The dimension $d_1$ and the parameter choices are specified in each numerical experiment below.

We also investigate the accuracy of the bound in~\eqref{estimate:compoundBSDE}. When a reference (analytical or highly accurate) solution $(X,Y,Z)$ to \eqref{eq:compoundBSDE} is available, we report the following error metrics for the numerical solution $(X^\pi, Y^\pi, Z^\pi)$ generated by the Compound BSDE method \eqref{method:compoundBSDE}, 
\begin{equation}\label{error-metric}
\begin{aligned}
& \operatorname{Err}(X) \coloneqq \sup_{t \in [0, T]}  \E \left\|X_{\pi(t)} - X_{\pi(t)}^\pi \right\|^2, \qquad
\operatorname{Err}(Y) \coloneqq  \sup_{1\leq j\leq M} \sup_{t \in [T_{j-1}, T_j]}  \E \left\|Y_{j, {\pi(t)}} - Y_{j, {\pi(t)}}^\pi \right\|^2, \\
& \operatorname{Err}(Z) \coloneqq  \sum_{j=1}^M  \sum_{t_i\in [T_{j-1}, T_j]}  \E \left\|Z_{j, t_i} - Z_{j,t_i}^\pi \right\|^2 h , \qquad
\operatorname{Total\ Err} \coloneqq \operatorname{Err}(X)+\operatorname{Err}(Y)+\operatorname{Err}(Z).
\end{aligned}
\end{equation}
where the expectation is approximated by the sample mean via Monte Carlo. Since our primary focus is option pricing, we additionally report the mean squared error of the option price and delta at time $t=0$, together with the corresponding relative mean squared errors obtained by normalising with the mean squared reference quantities.

Regarding implementation, we use a fully connected neural network with two hidden layers, each with $10+d_1$ neurons, and $\tanh$ activations. The batch size and validation size are both $5000$, corresponding to the number of simulated sample paths used per parameter update. We train with the Adam optimizer and an exponential learning-rate decay schedule with an initial learning rate $0.01$. All computations are performed in PyTorch~2.4 on a MacBook Pro with an Apple M1 chip and 16~GB RAM. The code for the numerical experiments can be found in the GitHub repository: \href{https://github.com/zhipeng-huang/thecompoundbsde}{https://github.com/zhipeng-huang/thecompoundbsde}.

\subsection{Plain compound options with $M=2$}\label{subsecex1}

We begin with a simple one-dimensional setting, namely plain compound options with $M=2$: call-on-call, call-on-put, put-on-call, and put-on-put. Under geometric Brownian motion dynamics and the risk-neutral measure, all four admit analytical formulas; see \cite{geske1979valuation} for the call-on-call case. Let $V_{\text{coc}}(t,x_t)$ denote the price of the call-on-call option at current state $(t,x_t)$ for $0\le t<T_1$, with strike prices $K_j$ and exercise times $T_j$ for $j=1,2$, and we use analogous notation for the other three contracts.

The reference option prices are given by, 
\begin{equation}\label{plaincompound_formula}
\begin{aligned}
V_{\text{coc}} (t, x_t) & =  x_t e^{-q (T_2 - t)} \Phi_2(a_1,b_1; P^{(2)}) - K_2 e^{-r (T_2 -t)} \Phi_2(a_2,b_2;P^{(2)}) - K_1 e^{-r (T_1 - t)} \Phi(a_2),  \\
V_{\text{cop}} (t, x_t) & =  K_2 e^{-r (T_2 - t)} \Phi_2(a_2,-b_2;-P^{(2)}) - x_t e^{-q (T_2 - t)} \Phi_2(a_1,-b_1;-P^{(2)}) - K_1 e^{-r (T_1 - t)} \Phi(a_2), \\
V_{\text{poc}} (t, x_t) & =  K_1 e^{-r (T_1 - t)} \Phi(-a_2) - x_t e^{-q (T_2 - t)} \Phi_2(-a_1,b_1;-P^{(2)}) + K_2 e^{-r (T_2 - t)} \Phi_2(-a_2,b_2;-P^{(2)}), \\
V_{\text{pop}} (t, x_t) & =  K_1 e^{-r (T_1 - t)} \Phi(-a_2) + x_t e^{-q (T_2 - t)} \Phi_2(-a_1,-b_1;P^{(2)}) - K_2 e^{-r (T_2 - t)} \Phi_2(-a_2,-b_2;P^{(2)}),
\end{aligned}
\end{equation}
and we shall obtain the corresponding option deltas by directly computing the derivatives with respect to $x_t$, respectively. Here, $\Phi_2(\cdot, \cdot; P^{(2)})$ is the bivariate standard normal CDF with correlation matrix $P^{(2)}$, and $\Phi$ is the univariate standard normal CDF. The remaining quantities are
\begin{equation}
\begin{aligned}
& a_1 = \frac{\ln ( x_t / K_1^* ) + \left(r - q +\frac{1}{2} \sigma^2\right) (T_1 - t)}{\sigma \sqrt{T_1-t}},
\qquad
a_2 = a_1- \sigma \sqrt{T_1 - t}, \\
& b_1 = \frac{\ln \left( x_t /K_2 \right)+ \left(r - q +\frac{1}{2} \sigma^2\right) (T_2 - t)}{\sigma \sqrt{T_2-t}},
\qquad
b_2 = b_1 - \sigma \sqrt{T_2 - t},
\qquad
P^{(2)}_{12} = P^{(2)}_{21} = \sqrt{\frac{T_1-t}{T_2-t}} .
\end{aligned}
\end{equation}
The constant $K_1^*$ is the critical underlying level at time $T_1$ for which the inner option value equals the strike price $K_1$ of the outer option. More precisely, $K_1^*$ solves
\begin{equation}\label{eq-threshold}
\begin{aligned}
& V_{\text{c}}(T_1, K_1^* ; T_2, K_2 ) = K_1,
\qquad \text{for call-on-call or put-on-call},\\
& V_{\text{p}}(T_1, K_1^* ; T_2, K_2 ) = K_1,
\qquad \text{for call-on-put or put-on-put},
\end{aligned}
\end{equation}
where $V_{\text{c}}(t, x_t; T_2, K_2 )$ and $V_{\text{p}}(t, x_t; T_2, K_2 )$ denote the Black--Scholes call and put prices at time $t$ with underlying level $x$, strike $K_2$ and expires at $T_2$. The root-finding problems in~\eqref{eq-threshold} are solved efficiently via Brent's method implemented in SciPy.

To implement~\eqref{method:compoundBSDE}, we follow Case~\ref{case2} in Section~\ref{sec4}. Under the risk-neutral measure, we set
$f_1(t,x,y,z)=f_2(t,x,y,z)=-ry$
and choose $g_j$ according to the payoff type. For example, for a \emph{call-on-call} option we take
\[
g_1(x,y)=(y-K_1)^+,
\qquad
g_2(x)=(x-K_2)^+.
\]
Analogous choices apply to call-on-put, put-on-call, and put-on-put by changing the outer and inner payoffs.

In our experiments, we use the following parameters and test our method over different $N \in \{10,20,30,40,50\}$,
\[
r = 0.03,\quad q = 0,\quad \Sigma = 0.2,\quad x_0 = 14,\quad K_1 = 1,\quad K_2 = 14,\quad T_1=0.2,\quad T_2=0.4.
\]

Table~\ref{tab:plaincompound} presents the results for all four plain compound options with $N=50$. We compare the estimated price and delta at $t=0$ for $j=1$, denoted by $Y_{1,0}^\pi$ and $\Delta_{1,0}^\pi$, with the reference values $Y_{1,0}$ and $\Delta_{1,0}$ obtained from~\eqref{plaincompound_formula} and their analytical derivatives. We recall that the BSDE solution $(Y,Z)$ coincides with the option price and the appropriately scaled option delta via the Feynman--Kac formula, and we use this relation to recover the delta from $Z^\pi$. The corresponding relative mean squared errors are also reported. Overall, the method produces highly accurate estimates for both price and delta across all four contracts, with relative errors ranging from $1.04\times 10^{-3}$ down to $3.82\times 10^{-7}$.

To assess the bound in~\eqref{estimate:compoundBSDE}, we also compute the pathwise error metrics in~\eqref{error-metric} for different $N$. Figure~\ref{fig:plain-compound} shows the results for two representative contracts: call-on-call and put-on-put. All error metrics decrease approximately at first order in the step size $h=T/N$: increasing $N$ from $10$ to $50$ reduces the errors by roughly a factor $5$. This behaviour is consistent with the convergence estimate for~\eqref{method:compoundBSDE}. Moreover, the empirical total error (red curve) follows the same trend as the theoretical bound (purple curve), supporting the sharpness of~\eqref{estimate:compoundBSDE} in this setting.

\begin{table}[htbp]
\centering
\caption{Prices and deltas of four types of plain compound options at $t=0$.}
\label{tab:plaincompound}
\begin{tabular}{ccccccc}
\toprule
Option Types & $Y_{1,0}^\pi$ & $Y_{1,0}$ & reMSE & $\Delta_{1,0}^\pi$ & $\Delta_{1,0}$ & reMSE \\
\midrule
call-on-call & 0.227 & 0.224 & 1.491e-04 & 0.289 & 0.291 & 5.621e-05 \\
call-on-put  & 0.118 & 0.120 & 2.209e-04 & -0.163 & -0.168 & 1.044e-03 \\
put-on-call  & 0.432 & 0.430 & 3.029e-05 & -0.270 & -0.272 & 2.987e-05 \\
put-on-put   & 0.492 & 0.492 & 3.822e-07 & 0.267 & 0.269 & 5.460e-05 \\
\bottomrule
\end{tabular}

\end{table}

\begin{figure}[htbp]
    \centering
    \begin{subfigure}{0.45\textwidth}
        \centering
        \includegraphics[width=\linewidth]{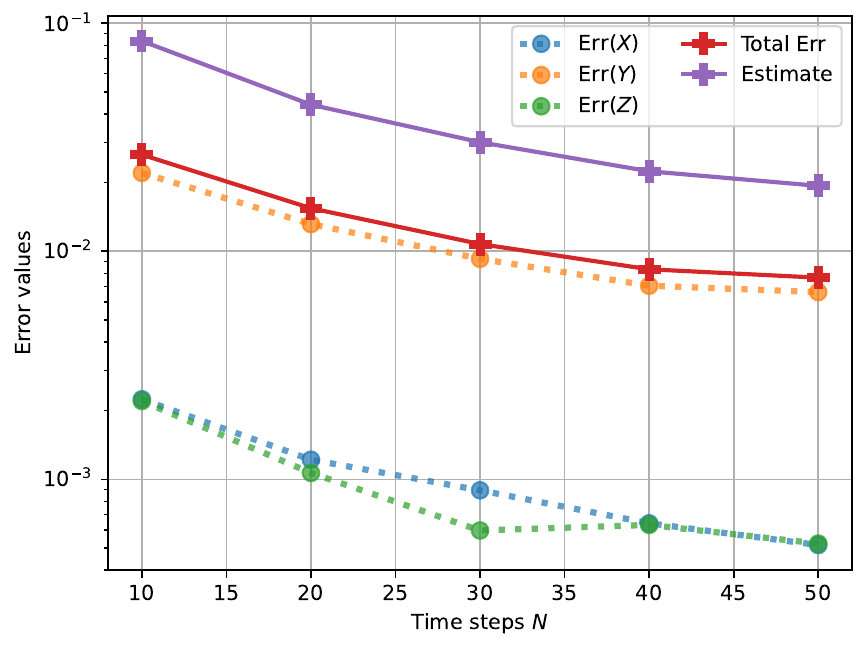}
        \caption{call-on-call}
    \end{subfigure}
    \hfill
    \begin{subfigure}{0.45\textwidth}
        \centering
        \includegraphics[width=\linewidth]{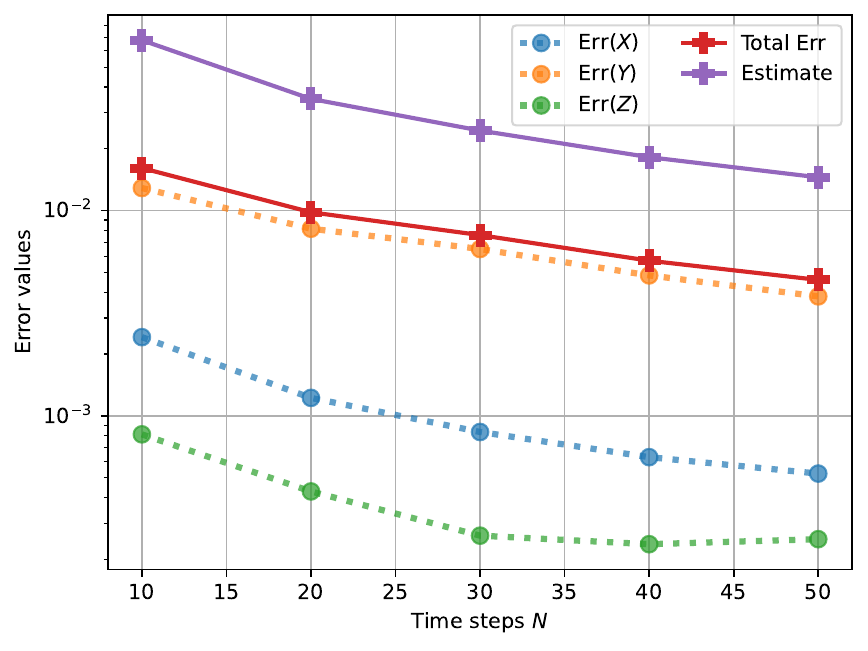}
        \caption{put-on-put}
    \end{subfigure}
    \caption{Convergence in $N$ for plain compound options.}
    \label{fig:plain-compound}
\end{figure}
\subsection{$M$-fold compound options}\label{subsecex2}

We next consider the $M$-fold compound call option for $M\ge2$ in the one-dimensional setting. Compared with the plain case with $M=2$, the $M$-fold structure allows for multiple sequential exercise decisions and is therefore closer to many real-world investment settings, where decisions are made gradually rather than in a single step.

Let $V_{\text{c}}^{(M)}(t,x_t)$ denote the price of the $M$-fold compound call at current state $(t,x_t)$ for $0\le t<T_1$. As reference values, we use the analytical expressions in \cite{thomassen2002sensitivity,cassimon2004valuation}, 
\begin{equation}
V_{\text{c}}^{(M)}(t,x_t)
= x_t e^{-q(T_M-t)}\, \Phi_M\!\left(a_1,\dots,a_M; P^{(M)}\right)
- \sum_{j=1}^M K_j e^{-r(T_j-t)}\, \Phi_j\!\left(b_1,\dots,b_j; P^{(j)}\right),
\end{equation}
and we obtain the delta by computing the derivative. Here $\Phi_j(\cdot;P^{(j)})$ denotes the $j$-dimensional standard normal CDF with correlation matrix $P^{(j)}$. The parameters are given, for $j=1,\dots,M$, by
\begin{equation}
\left(P^{(j)}\right)_{i\ell}=\sqrt{\frac{T_i-t}{T_\ell-t}},
\qquad i<\ell,
\end{equation}
\begin{equation}
b_j = \frac{\ln ( x_t/ K_j^* ) + \left(r - q - \frac{1}{2} \sigma^2\right) (T_j-t)}{\sigma \sqrt{T_j-t}},
\qquad
a_j = b_j + \sigma \sqrt{T_j-t}.
\end{equation}
To determine the critical levels, we set $K_M^*=K_M$, and determine $K_j^*$ recursively by solving
\begin{equation}\label{m-fold:criticalvalues}
V_{\text{c}}^{(M-j)}\!\left(T_j, K_j^*\right) =K_j, \qquad j=1,2,\dots,M-1,
\end{equation}
where $V_{\text{c}}^{(M-j)}\!\left(T_j, K_j^*\right)$ is the $(M-j)$-fold compound option price at time $T_j$, with an appropriate sequence of exercise times and strike prices, and in the case $M-j = 1$ it reduces to the Black-Scholes call price. We remark that although \eqref{m-fold:criticalvalues} can be solved numerically similarly to Subsection \ref{subsecex1}, the computational cost is high if one aims to compute the option prices over the whole time horizon $[0, T]$ when $M$ is large.


We solve the $M$-fold call problem for $M\in\{2,3,4,5\}$ using
\[
r = 0.03,\quad q = 0,\quad \Sigma = 0.2,\quad x_0 = 5,\quad h = 0.05, \quad K_j = 1,\quad T_j=j,\quad 1\le j\le M,
\]
and we keep the step size $h$ fixed across different $M$ by increasing the total number of time steps $N$ accordingly.

Table~\ref{tab:mfoldcompound} summarises the results. The accuracy is comparable to that in Subsection~\ref{subsecex1} for both price and delta. Moreover, at fixed $h$, performance remains essentially stable as $M$ increases, even though the loss contains more compounding terms. This suggests that, in this regime, the number of compounding conditions is not the main limitation of the method. 

\begin{table}[htbp]
\centering
\caption{Prices and deltas of $M$-fold compound options with different $M$ at $t=0$.}
\label{tab:mfoldcompound}
\begin{tabular}{ccccccc}
\toprule
$M$ & $Y_{1,0}^\pi$ & $Y_{1,0}$ &  reMSE & $\Delta_{1,0}^\pi$ & $\Delta_{1,0}$ & reMSE \\
\midrule
2 & 3.077 & 3.088 & 1.153e-05 & 0.994 & 1.000 & 3.037e-05 \\
3 & 2.173 & 2.174 & 2.337e-07 & 0.988 & 0.998 & 1.115e-04 \\
4 & 1.308 & 1.315 & 2.212e-05 & 0.933 & 0.942 & 9.379e-05 \\
5 & 0.639 & 0.640 & 7.892e-06 & 0.693 & 0.707 & 4.153e-04 \\
\bottomrule
\end{tabular}

\end{table}

\subsection{Bermudan geometric basket put option}\label{subsecex3}

We finally consider a Bermudan-style geometric basket put option. This contract is a standard benchmark for high-dimensional Bermudan pricing, since it can be reduced to an equivalent one-dimensional problem, providing an accurate reference value at modest computational cost.

We compute the reference price and delta by applying a binomial tree with a sufficiently large number of time steps to the equivalent one-dimensional formulation. Under the $d_1$-dimensional GBM dynamics~\eqref{sde-gbm}, the geometric average
\[
\hat{X}_t \coloneqq \left(\prod_{i=1}^{d_1} X_t^i\right)^{1/d_1}
\]
is itself a GBM driven by a one-dimensional Brownian motion $\{B_t\}_{0\le t\le T}$ with parameters
\begin{equation}
\hat{x}_0 = \left(\prod_{i=1}^{d_1} x_0^i\right)^{1/d_1},\quad
\hat{r}=r,\quad
\hat{q} = \frac{1}{d_1}\sum_{i=1}^{d_1}\!\left(q_i+\frac{1}{2}\sigma_i^2\right)-\frac{1}{2}\hat{\sigma}^2,\quad
\hat{\sigma}^2 = \frac{1}{d_1^2}\!\left( \sum_{i,j=1}^{d_1}\sigma_i\sigma_j\rho_{ij}\right),
\end{equation}
and payoff $(K-\hat{X}_t)^+$ at exercise times. Here $\rho_{ij}$ denotes the correlation coefficient between the $i$-th and $j$-th components of the original $d_1$-dimensional Brownian motion $W$.  

To run our compound BSDE method under the risk-neutral measure, we again take $f_j(t,x,y,z)=-ry$ and specify the compounding conditions to encode Bermudan exercise:
\[
g_j(x,y)=\max\!\left(y,\ \left(K_j-\left(\prod_{i=1}^{d_1} x^i\right)^{1/d_1}\right)^+\right),
\qquad
g_M(x)=\left(K_M-\left(\prod_{i=1}^{d_1} x^i\right)^{1/d_1}\right)^+.
\]
We test the method across different dimensions $d_1$ and different numbers of time steps $N$. Let $\mathcal{T}=\{0.1,0.2,0.3,0.4,0.5\}$ be the set of early exercise dates of the option, and we set $\rho_{ij} = 0$ for $i\neq j$ and use the parameters for the test,
\begin{equation}
r = 0.02,\quad q=0,\quad \Sigma = \operatorname{diag}(0.2\cdot \mathbf{1}_{d_1} ),\quad X_0 = 49 \cdot \mathbf{1}_{d_1},\quad K_j = 50,\quad T_j = 0.1 \cdot j ,\quad
M = 5
\end{equation}
Table~\ref{tab:bermudan1} reports prices and deltas at $t=0$ for varying $d_1$ with fixed $N=100$, while Table~\ref{tab:bermudan2} reports results for varying $N$ with fixed $d_1=20$. Due to space constraints, we report the range of the $d_1$ components of $\Delta_{1, 0}^\pi$. Table~\ref{tab:bermudan1} indicates that the method achieves good accuracy for both price and delta up to dimension $d_1=50$, with relative errors remaining stable as the dimension increases. In Table~\ref{tab:bermudan2}, the relative errors for both price and delta decrease as $N$ increases, which is consistent with the convergence behaviour predicted by~\eqref{estimate:compoundBSDE}.

\begin{table}[htbp]
\centering
\caption{Results for the Bermudan geometric basket put option over different $d_1$ at $t=0$.}
\label{tab:bermudan1}
\begin{tabular}{ccccccc}
\toprule
$d_1$ & $Y_{1,0}^\pi$ & $Y_{1,0}$ &  reMSE & $\Delta_{1,0}^\pi$ & $\Delta_{1,0}$ & reMSE \\
\midrule
1  & 3.101 & 3.071 & 9.699e-05 & [-0.513, -0.513] & $-0.510 \cdot \mathbf{1}_{d_1}$ & 3.178e-05 \\
5  & 1.762 & 1.745 & 9.233e-05 & [-0.122, -0.121] & $-0.121 \cdot \mathbf{1}_{d_1}$ & 5.673e-05 \\
10 & 1.446 & 1.432 & 9.943e-05 & [-0.067, -0.066] & $-0.066 \cdot \mathbf{1}_{d_1}$ & 7.713e-05 \\
20 & 1.236 & 1.223 & 1.077e-04 & [-0.037, -0.036] & $-0.036 \cdot \mathbf{1}_{d_1}$ & 2.321e-05 \\
30 & 1.153 & 1.140 & 1.295e-04 & [-0.026, -0.026] & $-0.026 \cdot \mathbf{1}_{d_1}$ & 3.413e-05 \\
40 & 1.105 & 1.095 & 8.627e-05 & [-0.020, -0.020] & $-0.020 \cdot \mathbf{1}_{d_1}$ & 1.479e-05 \\
50 & 1.076 & 1.067 & 6.196e-05 & [-0.017, -0.017] & $-0.017 \cdot \mathbf{1}_{d_1}$ & 3.361e-05 \\
\bottomrule
\end{tabular}

\end{table}

\begin{table}[htbp]
\centering
\caption{Results for the Bermudan geometric basket put option over different $N$ at $t=0$.}
\label{tab:bermudan2}
\begin{tabular}{ccccccc}
\toprule
$N$ & $Y_{1,0}^\pi$ & $Y_{1,0}$ &  reMSE & $\Delta_{1,0}^\pi$ & $\Delta_{1,0}$ & reMSE \\
\midrule
10  & 1.264 & 1.223 & 1.087e-03 & [-0.037, -0.037] & $-0.036 \cdot \mathbf{1}_{20}$ & 2.424e-04 \\
20  & 1.240 & 1.223 & 1.947e-04 & [-0.037, -0.037] & $-0.036 \cdot \mathbf{1}_{20}$ & 1.283e-04 \\
30  & 1.244 & 1.223 & 2.934e-04 & [-0.037, -0.037] & $-0.036 \cdot \mathbf{1}_{20}$ & 7.129e-05 \\
40  & 1.240 & 1.223 & 1.908e-04 & [-0.037, -0.036] & $-0.036 \cdot \mathbf{1}_{20}$ & 5.242e-05 \\
50  & 1.237 & 1.223 & 1.330e-04 & [-0.037, -0.037] & $-0.036 \cdot \mathbf{1}_{20}$ & 7.086e-05 \\
60  & 1.238 & 1.223 & 1.356e-04 & [-0.037, -0.037] & $-0.036 \cdot \mathbf{1}_{20}$ & 5.234e-05 \\
70  & 1.236 & 1.223 & 1.168e-04 & [-0.037, -0.036] & $-0.036 \cdot \mathbf{1}_{20}$ & 3.327e-05 \\
80  & 1.236 & 1.223 & 1.063e-04 & [-0.037, -0.037] & $-0.036 \cdot \mathbf{1}_{20}$ & 9.738e-05 \\
90  & 1.234 & 1.223 & 7.871e-05 & [-0.037, -0.036] & $-0.036 \cdot \mathbf{1}_{20}$ & 3.880e-05 \\
100 & 1.233 & 1.223 & 6.568e-05 & [-0.037, -0.036] & $-0.036 \cdot \mathbf{1}_{20}$ & 2.171e-05 \\
\bottomrule
\end{tabular}

\end{table}

\section{Conclusion}\label{sec6}

This paper introduced the compound BSDE formulation and the associated Compound BSDE method, a fully forward deep-learning approach for solving a broad class of pricing and optimal stopping problems in finance. The key idea is to represent multi-stage payoffs and exercise features through compounding conditions at compounding times, so that several BSDEs are coupled or compounded through these conditions and can be learned simultaneously. In contrast to many backward-style schemes for early-exercise products, our method keeps the simulation and the learning direction fully forward and enforces all compounding and terminal conditions through a single objective functional. Under standard Lipschitz and time-regularity assumptions, we established well-posedness of the compound BSDE, together with $L^2$-regularity properties that are needed for numerical analysis. Building on an \emph{a posteriori} error estimate for the deep BSDE method for a single BSDE, we derived a corresponding \emph{a posteriori} estimate for the compound method, showing that the discretisation and approximation errors are controlled by the step size and by the training loss. The numerical experiments confirm our theoretical results and demonstrate that the method is capable of handling problems with multiple folds as well as early-exercise features in high-dimensional settings. These results suggest that the compound BSDE perspective provides a unified numerical framework for both multi-stage derivatives and discrete-time optimal stopping problems.

Several directions for future work are natural. First, it would be desirable to relax some technical assumptions such as the Lipschitz continuity of the coefficients and to extend the analysis to more general drivers. Second, incorporating variance reduction techniques, control variates, or problem-adapted network architectures may further reduce training time and improve stability in very high dimensions. Third, extending the framework to models with jumps, stochastic volatility, or path-dependent payoffs would broaden the range of applications and further test the flexibility of the compounding approach. Overall, the compound BSDE method offers a flexible and theoretically grounded route to forward deep-learning solvers for complex pricing and optimal stopping problems.

\section*{Acknowledgments}
The first author gratefully acknowledges financial support from the China Scholarship Council (CSC) through a PhD scholarship.

\bibliographystyle{unsrt} 
\bibliography{ref} 


\end{document}